\begin{document}

\newtheorem{theorem}{Theorem}
\newtheorem{acknowledgement}[theorem]{Acknowledgement}
\newtheorem{axiom}[theorem]{Axiom}
\newtheorem{case}[theorem]{Case}
\newtheorem{claim}[theorem]{Claim}
\newtheorem{conclusion}[theorem]{Conclusion}
\newtheorem{condition}[theorem]{Condition}
\newtheorem{conjecture}[theorem]{Conjecture}
\newtheorem{criterion}[theorem]{Criterion}
\newtheorem{definition}{Definition}
\newtheorem{exercise}[theorem]{Exercise}
\newtheorem{lemma}{Lemma}
\newtheorem{corollary}{Corollary}
\newtheorem{notation}[theorem]{Notation}
\newtheorem{problem}[theorem]{Problem}
\newtheorem{proposition}{Proposition}
\newtheorem{solution}[theorem]{Solution}
\newtheorem{summary}[theorem]{Summary}
\newtheorem{assumption}{Assumption}
\newtheorem{example}{\bf Example}
\newtheorem{remark}{\bf Remark}

\newtheorem{thm}{Corollary}[section]
\renewcommand{\thethm}{\arabic{section}.\arabic{thm}}

\def\qed{$\Box$}
\def\QED{\mbox{\phantom{m}}\nolinebreak\hfill$\,\Box$}
\def\proof{\noindent{\emph{Proof:} }}
\def\poof{\noindent{\emph{Sketch of Proof:} }}
\def
\endproof{\hspace*{\fill}~\qed
\par
\endtrivlist\unskip}
\def\endproof{\hspace*{\fill}~\qed\par\endtrivlist\vskip3pt}

\def\E{\mathsf{E}}
\def\eps{\varepsilon}
\def\phi{\varphi}
\def\Lsp{{\boldsymbol L}}
\def\Bsp{{\boldsymbol B}}
\def\lsp{{\boldsymbol\ell}}
\def\Ltsp{{\Lsp^2}}
\def\Lpsp{{\Lsp^p}}
\def\Linsp{{\Lsp^{\infty}}}
\def\LtR{{\Lsp^2(\Rst)}}
\def\ltZ{{\lsp^2(\Zst)}}
\def\ltsp{{\lsp^2}}
\def\ltZt{{\lsp^2(\Zst^{2})}}
\def\ninN{{n{\in}\Nst}}
\def\oh{{\frac{1}{2}}}
\def\grass{{\cal G}}
\def\ord{{\cal O}}
\def\dist{{d_G}}
\def\conj#1{{\overline#1}}
\def\ntoinf{{n \rightarrow \infty}}
\def\toinf{{\rightarrow \infty}}
\def\tozero{{\rightarrow 0}}
\def\trace{{\operatorname{trace}}}
\def\ord{{\cal O}}
\def\UU{{\cal U}}
\def\rank{{\operatorname{rank}}}
\def\acos{{\operatorname{acos}}}

\def\SINR{\mathsf{SINR}}
\def\SNR{\mathsf{SNR}}
\def\SIR{\mathsf{SIR}}
\def\tSIR{\widetilde{\mathsf{SIR}}}
\def\Ei{\mathsf{Ei}}
\def\l{\left}
\def\r{\right}
\def\lb{\left\{}
\def\rb{\right\}}

\setcounter{page}{1}

\newcommand{\eref}[1]{(\ref{#1})}
\newcommand{\fig}[1]{Fig.\ \ref{#1}}

\def\bydef{:=}
\def\ba{{\mathbf{a}}}
\def\bb{{\mathbf{b}}}
\def\bc{{\mathbf{c}}}
\def\bd{{\mathbf{d}}}
\def\bee{{\mathbf{e}}}
\def\bff{{\mathbf{f}}}
\def\bg{{\mathbf{g}}}
\def\bh{{\mathbf{h}}}
\def\bi{{\mathbf{i}}}
\def\bj{{\mathbf{j}}}
\def\bk{{\mathbf{k}}}
\def\bl{{\mathbf{l}}}
\def\bm{{\mathbf{m}}}
\def\bn{{\mathbf{n}}}
\def\bo{{\mathbf{o}}}
\def\bp{{\mathbf{p}}}
\def\bq{{\mathbf{q}}}
\def\br{{\mathbf{r}}}
\def\bs{{\mathbf{s}}}
\def\bt{{\mathbf{t}}}
\def\bu{{\mathbf{u}}}
\def\bv{{\mathbf{v}}}
\def\bw{{\mathbf{w}}}
\def\bx{{\mathbf{x}}}
\def\by{{\mathbf{y}}}
\def\bz{{\mathbf{z}}}
\def\b0{{\mathbf{0}}}

\def\bA{{\mathbf{A}}}
\def\bB{{\mathbf{B}}}
\def\bC{{\mathbf{C}}}
\def\bD{{\mathbf{D}}}
\def\bE{{\mathbf{E}}}
\def\bF{{\mathbf{F}}}
\def\bG{{\mathbf{G}}}
\def\bH{{\mathbf{H}}}
\def\bI{{\mathbf{I}}}
\def\bJ{{\mathbf{J}}}
\def\bK{{\mathbf{K}}}
\def\bL{{\mathbf{L}}}
\def\bM{{\mathbf{M}}}
\def\bN{{\mathbf{N}}}
\def\bO{{\mathbf{O}}}
\def\bP{{\mathbf{P}}}
\def\bQ{{\mathbf{Q}}}
\def\bR{{\mathbf{R}}}
\def\bS{{\mathbf{S}}}
\def\bT{{\mathbf{T}}}
\def\bU{{\mathbf{U}}}
\def\bV{{\mathbf{V}}}
\def\bW{{\mathbf{W}}}
\def\bX{{\mathbf{X}}}
\def\bY{{\mathbf{Y}}}
\def\bZ{{\mathbf{Z}}}

\def\bxi{{\boldsymbol{\xi}}}

\def\sT{{\mathsf{T}}}
\def\sH{{\mathsf{H}}}
\def\cmp{{\text{cmp}}}
\def\cmm{{\text{cmm}}}
\def\WPT{{\text{WPT}}}
\def\lo{{\text{lo}}}
\def\gl{{\text{gl}}}

\def\tT{{\widetilde{T}}}
\def\tF{{\widetilde{F}}}
\def\tP{{\widetilde{P}}}
\def\tG{{\widetilde{G}}}
\def\tbh{{\widetilde{\mathbf{h}}}}
\def\tbg{{\widetilde{\mathbf{g}}}}

\def\mA{{\mathbb{A}}}
\def\mB{{\mathbb{B}}}
\def\mC{{\mathbb{C}}}
\def\mD{{\mathbb{D}}}
\def\mE{{\mathbb{E}}}
\def\mF{{\mathbb{F}}}
\def\mG{{\mathbb{G}}}
\def\mH{{\mathbb{H}}}
\def\mI{{\mathbb{I}}}
\def\mJ{{\mathbb{J}}}
\def\mK{{\mathbb{K}}}
\def\mL{{\mathbb{L}}}
\def\mM{{\mathbb{M}}}
\def\mN{{\mathbb{N}}}
\def\mO{{\mathbb{O}}}
\def\mP{{\mathbb{P}}}
\def\mQ{{\mathbb{Q}}}
\def\mR{{\mathbb{R}}}
\def\mS{{\mathbb{S}}}
\def\mT{{\mathbb{T}}}
\def\mU{{\mathbb{U}}}
\def\mV{{\mathbb{V}}}
\def\mW{{\mathbb{W}}}
\def\mX{{\mathbb{X}}}
\def\mY{{\mathbb{Y}}}
\def\mZ{{\mathbb{Z}}}

\def\cA{\mathcal{A}}
\def\cB{\mathcal{B}}
\def\cC{\mathcal{C}}
\def\cD{\mathcal{D}}
\def\cE{\mathcal{E}}
\def\cF{\mathcal{F}}
\def\cG{\mathcal{G}}
\def\cH{\mathcal{H}}
\def\cI{\mathcal{I}}
\def\cJ{\mathcal{J}}
\def\cK{\mathcal{K}}
\def\cL{\mathcal{L}}
\def\cM{\mathcal{M}}
\def\cN{\mathcal{N}}
\def\cO{\mathcal{O}}
\def\cP{\mathcal{P}}
\def\cQ{\mathcal{Q}}
\def\cR{\mathcal{R}}
\def\cS{\mathcal{S}}
\def\cT{\mathcal{T}}
\def\cU{\mathcal{U}}
\def\cV{\mathcal{V}}
\def\cW{\mathcal{W}}
\def\cX{\mathcal{X}}
\def\cY{\mathcal{Y}}
\def\cZ{\mathcal{Z}}
\def\cd{\mathcal{d}}
\def\Mt{M_{t}}
\def\Mr{M_{r}}
\def\O{\Omega_{M_{t}}}
\newcommand{\figref}[1]{{Fig.}~\ref{#1}}
\newcommand{\tabref}[1]{{Table}~\ref{#1}}

\newcommand{\fb}{\tx{fb}}
\newcommand{\nf}{\tx{nf}}
\newcommand{\BC}{\tx{(bc)}}
\newcommand{\MAC}{\tx{(mac)}}
\newcommand{\Pout}{p_{\mathsf{out}}}
\newcommand{\nnn}{\nn\\}
\newcommand{\FB}{\tx{FB}}
\newcommand{\TX}{\tx{TX}}
\newcommand{\RX}{\tx{RX}}
\renewcommand{\mod}{\tx{mod}}
\newcommand{\m}[1]{\mathbf{#1}}
\newcommand{\td}[1]{\tilde{#1}}
\newcommand{\sbf}[1]{\scriptsize{\textbf{#1}}}
\newcommand{\stxt}[1]{\scriptsize{\textrm{#1}}}
\newcommand{\suml}[2]{\sum\limits_{#1}^{#2}}
\newcommand{\sumlk}{\sum\limits_{k=0}^{K-1}}
\newcommand{\eqhsp}{\hspace{10 pt}}
\newcommand{\tx}[1]{\texttt{#1}}
\newcommand{\Hz}{\ \tx{Hz}}
\newcommand{\sinc}{\tx{sinc}}
\newcommand{\diag}{\mathrm{diag}}
\newcommand{\MAI}{\tx{MAI}}
\newcommand{\ISI}{\tx{ISI}}
\newcommand{\IBI}{\tx{IBI}}
\newcommand{\CN}{\tx{CN}}
\newcommand{\CP}{\tx{CP}}
\newcommand{\ZP}{\tx{ZP}}
\newcommand{\ZF}{\tx{ZF}}
\newcommand{\SP}{\tx{SP}}
\newcommand{\MMSE}{\tx{MMSE}}
\newcommand{\MINF}{\tx{MINF}}
\newcommand{\RC}{\tx{MP}}
\newcommand{\MBER}{\tx{MBER}}
\newcommand{\MSNR}{\tx{MSNR}}
\newcommand{\MCAP}{\tx{MCAP}}
\newcommand{\vol}{\tx{vol}}
\newcommand{\ah}{\hat{g}}
\newcommand{\tg}{\tilde{g}}
\newcommand{\teta}{\tilde{\eta}}
\newcommand{\heta}{\hat{\eta}}
\newcommand{\uh}{\m{\hat{s}}}
\newcommand{\eh}{\m{\hat{\eta}}}
\newcommand{\hv}{\m{h}}
\newcommand{\hh}{\m{\hat{h}}}
\newcommand{\Po}{P_{\mathrm{out}}}
\newcommand{\Poh}{\hat{P}_{\mathrm{out}}}
\newcommand{\Ph}{\hat{\gamma}}
\newcommand{\mat}[1]{\begin{matrix}#1\end{matrix}}
\newcommand{\ud}{^{\dagger}}
\newcommand{\C}{\mathcal{C}}
\newcommand{\nn}{\nonumber}
\newcommand{\nInf}{U\rightarrow \infty}

\title{In-Memory Computing Enabled Deep MIMO Detection to Support Ultra-Low-Latency Communications}

\markboth{IEEE Transactions on Mobile Computing}%
{Shell \MakeLowercase{\textit{et al.}}: Bare Demo of IEEEtran.cls for IEEE Journals}

\author{Tingyu Ding,~\IEEEmembership{Graduate Student Member,~IEEE}, Qunsong Zeng,~\IEEEmembership{Member,~IEEE}, and  Kaibin Huang,~\IEEEmembership{Fellow,~IEEE}
\thanks{T. Ding, Q. Zeng, and K. Huang are with the Department of Electrical and Computer Engineering, The University of Hong Kong, Hong Kong, China. 
(Equal contribution: T. Ding and Q. Zeng).
Corresponding authors: Q. Zeng and K. Huang (emails: qszeng@eee.hku.hk, huangkb@hku.hk).} 
}

\maketitle

\begin{abstract}
The development of sixth-generation (6G) mobile networks imposes unprecedented latency and reliability demands on multiple-input multiple-output (MIMO) communication systems, a key enabler of high-speed radio access. Recently, deep unfolding-based detectors, which map iterative algorithms onto neural network architectures, have emerged as a promising approach, combining the strengths of model-driven and data-driven methods to achieve high detection accuracy with relatively low complexity. However, algorithmic innovation alone is insufficient; software-hardware co-design is essential to meet the extreme latency requirements of 6G (i.e., 0.1 milliseconds). This motivates us to propose leveraging in-memory computing, which is an analog computing technology that integrates memory and computation within memristor circuits, to perform the intensive matrix-vector multiplication (MVM) operations inherent in deep MIMO detection at the nanosecond scale. Specifically, we introduce a novel architecture, called the deep in-memory MIMO (IM-MIMO) detector, characterized by two key features. First, each of its cascaded computational blocks is decomposed into channel-dependent and channel-independent neural network modules. Such a design minimizes the latency of memristor reprogramming in response to channel variations, which significantly exceeds computation time. Second, we develop a customized detector-training method that exploits prior knowledge of memristor-value statistics to enhance robustness against programming noise. Furthermore, we conduct a comprehensive analysis of the IM-MIMO detector’s performance, evaluating detection accuracy, processing latency, and hardware complexity. Our study quantifies detection error as a function of various factors, including channel noise, memristor programming noise, and neural network size. The insights gained from these tradeoffs provide valuable guidance for practical system implementation.
\end{abstract}
\begin{IEEEkeywords}
In-memory computing, signal processing, deep unfolding, multiple-input-multiple-output (MIMO).
\end{IEEEkeywords}   
\section{Introduction}
The sixth-generation (6G) mobile networks are designed to support extreme mobility at speeds of up to 1000 km/h, reducing the required air-latency from 1 ms in 5G to just 0.1 ms in 6G \cite{banafaa20236g}. 
At the same time, the increasing demand for higher data rates (up to 100 Gbps) is driving the evolution of multiple-input multiple-output (MIMO) systems toward massive MIMO configurations \cite{wang2024tutorial, bjornson2023twenty, marzetta2010noncooperative}. 
While the expansion of antenna arrays improves spatial multiplexing gains, it also substantially increases the complexity of baseband processing and poses challenges for implementing latency-critical 6G applications \cite{dardari2021holographic, khan2022digital, frotzscher2014requirements, zhai2024golden}. 
The continuous demand for higher computational speeds from 2G to 5G has largely been addressed by scaling down transistor sizes following Moore's Law \cite{shalf2020future}. 
However, as we enter the 6G era, this approach is reaching its physical limits, with transistor sizes approaching atomic scales \cite{leiserson2020there}. 
To ensure continuous scaling of computational power and efficiency, an alternative strategy is to revolutionize computing architectures, which has led to the emergence of in-memory computing (IMC) \cite{verma2019memory}, \cite{wen2024fusion}. 
This paradigm shift motivates this work to design novel baseband processing architectures for MIMO detection, termed in-memory MIMO (IM-MIMO) processing, which aims to meet the ultra-low-latency requirements of 6G systems.

MIMO detection constitutes the dominant computational load in the baseband processing of MIMO systems \cite{albreem2021deep}. The optimal maximum likelihood (ML) detector requires an exhaustive search, with
computational complexity scaling exponentially with MIMO dimensionality \cite{zhu2002performance}. Although sub-optimal algorithms (e.g., approximate message passing \cite{jeon2015optimality} and
expectation propagation \cite{cespedes2014expectation}) have lower complexity, they still require a large number of iterations and are insufficient for real-time applications. To overcome these limitations, data-driven methods have emerged \cite{albreem2021deep}, \cite{tan2018improving}. They leverage deep neural networks (DNNs) to accurately approximate optimal nonlinear detection functions. Their drawbacks are caused by the “black-box” nature of DNNs, resulting in limited interpretability and lack of performance guarantees \cite{balatsoukas2019deep}. This challenge motivates the development of a new approach known as deep unfolding, which unfolds a model-based iterative algorithm into cascaded blocks within a neural network architecture \cite{hershey2014deep, he2020model, khani2020adaptive}. For example, the Detection Network (DetNet) unfolds the iterations of projected
gradient descent (PGD), achieving detection accuracy close to that of the ML detector \cite{samuel2019learning}. Deep unfolding combines the interpretability and structural advantages of classical algorithms with the optimization capabilities of data-driven learning \cite{he2020model}. Nevertheless, deep unfolding-based MIMO detectors still heavily rely on matrix-vector multiplications (MVMs) within network blocks. 
Due to the high computational complexity of these operations, detectors implemented on conventional digital baseband processors still struggle to realize ultra-low-latency large-scale MIMO detection.

The performance of traditional complementary metal-oxide-semiconductor (CMOS)-based digital processors is limited by two factors: transistor scaling \cite{shalf2020future} and the memory wall \cite{kim2023samsung}. The former refers to the physical limitations that hinder further miniaturization of transistors \cite{shalf2020future}. 
The latter arises from frequent data transfers between physically separated processing and memory units, with memory access accounting for approximately 80\% of total latency \cite{shalf2020future}. 
A particularly promising solution to the memory wall is the aforementioned IMC architecture, which performs computation directly within memory units \cite{verma2019memory}. 
A common realization of IMC is based on crossbar arrays of non-volatile memory elements, termed \emph{memristors}, where computation is executed by exploiting the voltage-current relationships governed by Ohm's and Kirchhoff's laws \cite{ielmini2018memory}. 
As a result, this architecture enables highly parallel execution of multiply-and-accumulate (MAC) operations for MVM to achieve ultra-low latency, for example, completing MVM within a single input-pulse duration on the nanosecond scale \cite{wu2021atomically}. This makes IMC particularly well-suited for MVM-intensive tasks such as MIMO detection.

Despite significant advancements in CMOS-based MIMO detectors implemented by application-specific integrated circuits (ASICs), these designs mainly rely on fixed and lightweight arithmetic operations and are inherently constrained by the von Neumann bottleneck~\cite{jeon2019354, prabhu20173, tang20210, lee202540, li2025deep}. While buffers and interconnects can mitigate data movement issues, they cannot eliminate them entirely. For example, \cite{prabhu20173} employs ping-pong buffers and pipeline stages for data reordering, but data must still be transferred between the unified processing element (PE) array and distributed registers. The detector in~\cite{lee202540} reduces memory access via layer ordering, but the large Gram matrix of the orthogonal time-frequency space (OTFS) system still requires frequent reads from on-chip static random-access memory (SRAM). In \cite{li2025deep}, the PE utilization is optimized to 90\% through reconfigurable arrays, but the arbiter's data routing between Gram/matched filter (MF)/interference computation modes introduces additional latency, resulting in an idle phase of 144 clock cycles due to updates of channel states. 
Furthermore, parallelism is limited by CMOS scalability. The grouped layer-parallel architecture in~\cite{tang20210} reduces the PE count by 4.24$\times$ but sacrifices 2.41$\times$ throughput, and the linear MAC array in~\cite{jeon2019354} fails to realize parallel computing, requiring 32 cycles for a $32\times32$ matrix multiplication. 
To overcome the von Neumann bottleneck and the efficiency limitations of conventional CMOS-based detectors, recent studies have explored IMC for MIMO detection.

Researchers have proposed applying memristor-based IMC to achieve single-shot computation of zero-forcing (ZF) and minimum mean squared error (MMSE) MIMO detectors \cite{zeng2023realizing}. Subsequent efforts have focused on the design of nonlinear IM-MIMO detectors, including closed-loop ridge-regression-based detection \cite{mannocci2022analogue} and ML detection \cite{ren2024accelerating}. 
However, unlike digital processing, a critical limitation of IMC is the inaccuracy of programming memristors prior to computation. 
Repeated programming is required to suppress the resultant programming noise and achieve satisfactory accuracy, but this comes at the cost of substantial latency. 
This challenge is particularly relevant for IM-MIMO detection, as time-varying channels require periodic programming of channel matrices into crossbar arrays. 
As observed in the literature, this step often dominates the overall system latency with IM-MIMO detection \cite{zeng2023realizing}; the issue is exacerbated when iterative detection algorithms are employed \cite{ren2024accelerating}. 
Therefore, crossbar-programming noise and latency are key challenges to tackle before IM-MIMO detection can be effective for systems with fast fading. 

In light of prior work, deep IM-MIMO detection based on deep unfolding remains an underexplored area. This approach offers several notable advantages. 
First, matrix operations such as multiplication can be executed at the nanosecond timescale. 
This helps to overcome the computational bottleneck inherent in deep unfolding-based MIMO detectors, which require numerous MVM operations. 
Second, unlike iterative algorithms, the deep unfolding architecture, comprising fixed-depth, feed-forward neural network blocks, is largely static after initial programming. 
This significantly reduces the need for frequent crossbar reprogramming, a major source of latency in IMC systems. 
Third, deep unfolding architectures inherit the inherent robustness of neural networks to external noise, such as parameter quantization noise \cite{hershey2014deep}. 
This robustness can be leveraged to tolerate memristor programming noise, potentially eliminating the costly verification step during programming. 
These advantages motivate the development of a deep IM-MIMO detection framework in this work.

To the best of our knowledge, this work represents the first attempt to design a deep IM-MIMO detector with the objective of overcoming the computational bottleneck of the state-of-the-art (SOTA) deep unfolding-based MIMO detection. The key contributions and findings of this work are summarized as follows:

\begin{itemize}
    \item \textbf{Circuit design of deep IM-MIMO detector:} 
    The proposed architecture, implemented using memristor-based crossbar arrays, addresses critical challenges in processing latency, hardware complexity, and detection accuracy.
    To begin with, all channel-relevant operations across network blocks are consolidated into a unified component termed the channel-dependent module. This module, efficiently realized via crossbar arrays, is the sole component requiring reprogramming in response to channel variations. Owing to its significantly smaller size relative to the full model and its reusability across all blocks, this design substantially reduces both programming latency and hardware complexity.
    Furthermore, a customized training approach is introduced to enhance robustness against programming noise. This strategy incorporates a behavioral model of memristor conductance evolution as a function of programming pulse count, along with empirical characterization of cycle-to-cycle (C2C) variations, i.e., stochastic fluctuations in conductance change between successive programming pulses. During training, synthetic noise following this statistical distribution is injected, enabling the network to learn noise-tolerant representations. As a result, the system gains improved robustness to programming noise during inference.

    \item \textbf{Performance analysis}: We mathematically characterize the performance of the proposed deep IM-MIMO design in terms of detection accuracy, processing latency, and hardware complexity. 
    The analysis quantifies the dependence of detection error on channel noise and programming noise. 
    In particular, the scaling laws of detection error w.r.t. the network size and the number of unfolded blocks are derived.    
    The results suggest that detection accuracy can be improved by reducing programming noise, increasing the network size, or adding more unfolded blocks at the cost of increased processing latency and hardware complexity. 
    On the other hand, the overall processing latency comprises two components: the programming latency for writing the channel matrix and the computation latency for detecting subsequent symbols. The former exhibits a modest super-linear scaling with the MIMO size, whereas the latter scales linearly with the number of unfolded blocks.
    In terms of hardware complexity, both the network size and the block count are key factors. A larger network size necessitates larger crossbar arrays, increasing the number of memristors and peripheral circuits, while a higher block count requires the cascade of more circuit modules. Both increase the overall hardware cost.

    \item \textbf{Simulation results:} The performance of the proposed deep IM-MIMO detector is validated through extensive simulations. 
    It was observed that the proposed deep IM-MIMO detector can achieve detection accuracy close to that of the optimal ML detector, with its accuracy being further enhanced through the integration of channel coding.
    Moreover, the proposed noise-aware training approach exhibits 2.7$\times$ enhanced robustness against C2C variations.
    Compared with SOTA ASIC-based MIMO detectors, the proposed deep IM-MIMO detector achieves the best accuracy performance, while delivering a maximum throughput gain of 159.3$\times$, an energy efficiency gain of 8.1$\times$, and an area efficiency gain of 3.9$\times$. Furthermore, the proposed MIMO detector outperforms conventional digital processors by offering up to $10^3\times$ higher throughput and $1.1\times10^4$ higher energy efficiency.
    
\end{itemize}

The remainder of this paper is organized as follows. The system models and metrics of deep IM-MIMO detection are described in Section~\ref{Sec: PRELIMINARIES}. The architecture and the circuit design details are presented in Section \ref{IMC Architecture}. Section \ref{analysis} provides theoretical performance analysis for our design. In Section \ref{results}, we validate the improved performance of the proposed design with simulation results. Finally, conclusions are drawn in Section \ref{conclusion}.

\section{Preliminaries, Models, and Metrics}\label{Sec: PRELIMINARIES}
To facilitate a better understanding of the proposed deep IM-MIMO detection, this section presents preliminaries on the deep unfolding-based MIMO detector, along with the underlying working principles of IMC. Relevant system models and metrics are described as follows.

\subsection{Deep Unfolding-Based MIMO Detection}
We consider a MIMO system with \textit{$N_t$} transmit antennas and \textit{$N_r$} receive antennas. Let $\mathcal{\tilde S}$ be the finite set of constellation points, and $\mathbf{\tilde{x}}\in \mathcal{\tilde S}^{N_t}$ denotes the transmitted symbol vector. The received signal vector $\mathbf{\tilde{y}}\in \mathbb{C}^{N_r}$ can be expressed as
\begin{equation}
    \mathbf{\tilde y} = \mathbf{\tilde H}\mathbf{\tilde x} + \mathbf{\tilde n}, \label{complex_equation}
\end{equation}
where $\mathbf{\tilde H} \in \mathbb{C}^{{N_r\times N_t}}$ is the channel matrix with elements following i.i.d. Rayleigh fading, yielding $\mathbf{\tilde H} \sim \mathcal{CN}(\mathbf{0},2\mathbf{I})$, and $\mathbf{\tilde{n}}\in \mathbb{C}^{N_r}$ represents the additive white Gaussian noise (AWGN) following ${\mathbf{\tilde n}}\sim \mathcal{CN}(\mathbf{0},2\sigma _{{n}}^2\mathbf{I})$. The transmitted symbol vector is normalized to have unit power, i.e., $\|\mathbf{\tilde x}\|_2^2=1$. The task of MIMO detection is to estimate $\mathbf{\tilde{x}}$ based on the knowledge of $\mathbf{\tilde y}$, $\mathbf{\tilde H}$, and the statistics of noise $\mathbf{\tilde{n}}$.

To process complex-valued signals, we reformulate (\ref{complex_equation}) by applying a complex-to-real transformation:
\begin{equation}\label{channel_model}
    \mathbf{y} = \mathbf{H}\mathbf{x} + \mathbf{n},
\end{equation}
where 
\begin{equation*}
\mathbf{y} = \left[ {\begin{array}{*{20}{c}}
{\Re (\mathbf{\tilde y})}\\
{\Im (\mathbf{\tilde y})}
\end{array}} \right],~ 
\mathbf{x} = \left[ {\begin{array}{*{20}{c}}
{\Re (\mathbf{\tilde x})}\\
{\Im (\mathbf{\tilde x})}
\end{array}} \right],~ 
\mathbf{n} = \left[ {\begin{array}{*{20}{c}}
{\Re (\mathbf{\tilde n})}\\
{\Im (\mathbf{\tilde n})}
\end{array}} \right],
\end{equation*}

\begin{equation}
\mathbf{H} = \left[ {\begin{array}{*{20}{c}}
{\Re (\mathbf{\tilde H})}&{ - \Im (\mathbf{\tilde H})}\\
{\Im (\mathbf{\tilde H})}&{\Re (\mathbf{\tilde H})}
\end{array}} \right].
\end{equation}

Accordingly, the ML detector estimates the transmitted vector as $\mathbf{\hat x}(\mathbf{y})$ by solving the following problem:
\begin{equation}
    \mathbf{\hat x}(\mathbf{y}) = \mathop {\arg \min }\limits_{{\rm{\mathbf x}} \in \mathcal{S}^{2N_t} }  \left\| {{\rm{\mathbf y}} - {\rm{\mathbf{Hx}}}} \right\|_2^2, \label{ML_detection}
\end{equation}
where the minimization is over $\mathbf{x}\in\mathcal{S}^{2N_t}$, i.e., over all possible transmitted vectors. Notably, $\mathcal{S}=\Re (\mathcal{\tilde S})$, which is also equal to $\Im (\mathcal{\tilde S})$ in the complex-valued constellations. 

We consider employing the DetNet, a deep unfolding approach for the above ML detection~\cite{samuel2019learning}. It unfolds the PGD algorithm into cascaded blocks:
\begin{equation}
    \cM(\bH,\by)=\cM_L(\bH,\by,\cM_{L-1}(\bH,\by,\cdots\cM_1(\bH,\by)\cdots)),
\end{equation}
where $L$ is the number of required blocks. The operations within the $k$-th block, $\mathcal{M}_{k}$, are described as follows:
\begin{equation}
    {\mathbf{s}_k} = {\mathbf{x}_{k-1}} - {\alpha _{1k}}{\mathbf{H}^{T}}\mathbf{y} + {\alpha _{2k}}{\mathbf{H}^T}\mathbf{H}{\mathbf{x}_{k-1}}, \label{s_k}
\end{equation}
\begin{equation}
    {\mathbf{u}_k} \!=\! \left[\! {\begin{array}{*{20}{c}}
    {{\mathbf{s}_k}}\\
    {{\mathbf{a}_{k-1}}}
    \end{array}} \!\right] \!=\! \left[\! {\begin{array}{*{20}{c}}
    {{\mathbf{x}_{k-1}} \!-\! {\alpha _{1k}}{\mathbf{H}^T}\mathbf{y} \!+\! {\alpha _{2k}}{\mathbf{H}^T}\mathbf{H}{\mathbf{x}_{k-1}}}\\
    {{\mathbf{a}_{k-1}}}
    \end{array}} \!\right]\!, \label{u_k}
\end{equation}
\begin{equation}
    {\mathbf{z}_k} = \rho \left( {{\mathbf{W}_{1k}}{\mathbf{u}_k} + {\mathbf{b}_{1k}}} \right), \label{z_k}
\end{equation}
\begin{equation}
    {{\mathbf{x}_{k}}} = {\mathbf{W}_{2k}}{\mathbf{z}_k} + {\mathbf{b}_{2k}}, \label{x_k}
\end{equation}
\begin{equation}
    {{\mathbf{a}_{k}}} = {\mathbf{W}_{3k}}{\mathbf{z}_k} + {\mathbf{b}_{3k}}, \label{a_k}
\end{equation}
where $\rho (x) = \max (0, x)$ is the ReLU nonlinear activation function. Initially, we set $\mathbf{x}_0 = \mathbf{0}$ and $\mathbf{a}_0 = \mathbf{0}$. Notably, $\mathbf{a}_k$ serves as an auxiliary variable introduced to expand the input dimensionality. It acts as a hidden state that propagates through the network with $\mathbf{x}_k$, functioning as a bias-like term without direct physical meaning. The trainable parameters consist of the set $\Theta  = \left\{ {{\mathbf{W}_{1k}},{\mathbf{W}_{2k}},{\mathbf{W}_{3k}},{\mathbf{b}_{1k}},{\mathbf{b}_{2k}},{\mathbf{b}_{3k}},{\alpha _{1k}},{\alpha _{2k}}} \right\}_{k = 1}^L$. In this architecture, each block computes a linear combination of its inputs to yield an intermediate vector $\mathbf{s}_k$, which is then fed into a two-layer fully-connected neural network. Each iteration constitutes a computational block, and thus the block number $L$ represents the number of iterations.

\subsubsection{Circuit architecture}
The key circuit element for enabling IMC is the memristor, a non-volatile electronic device that stores information in its conductance state, which can be modified through programming operations and read in a non-destructive manner~\cite{chua2003memristor}. When assembled in a crossbar array structure, memristors enable highly parallel computing by performing data storage and certain computational tasks directly within memory. A representative crossbar array architecture employed for IMC implementation is the {one-transistor-one-resistor} (1T1R) array. As depicted in Fig. \ref{MVM}, each cross-point integrates a transistor and a memristor, where the {word-line} (WL) applies voltage to the transistor's gate to select the appropriate memristor. When a memristor is activated via its WL, the associated transistor is turned on, enabling current to flow through the memristor from its {bit-line} (BL) to {source-line} (SL) for read or programming operations, while unselected memristors remain isolated due to their transistors being in the off state.  

\begin{figure}[t!]
    \centering
    \includegraphics[width=0.8\columnwidth]{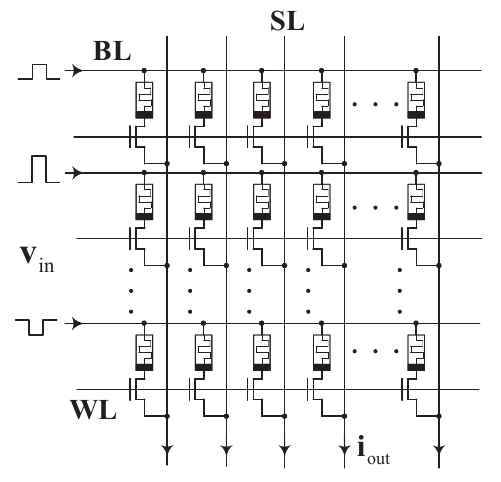}
    \caption{Circuit schematic of a 1T1R array. }
    \label{MVM}
\end{figure}

\subsubsection{MVM computation}
 
IMC leverages the parallelism of these memristor-based architectures to perform MVMs directly within the crossbar array in a single operation~\cite{verma2019memory}. To begin with, the real matrix is mapped into the conductances of memristors in a pair of crossbar arrays. The conductance matrix $\mathbf{G}$ is represented as the difference between two positive conductance matrices, i.e., $\bf G=G^+-G^-$. Next, the vector is mapped into an input voltage vector $\bf {v}$$_{\text{in}}$. By applying Ohm’s law and Kirchhoff’s law, the MVM result is obtained as the output current vector $\bf {i}$$_{\text{out}} = $$ \bf (G^+-G^-)$$\bf {v}$$_{\text{in}}$. Once the conductance matrix is configured, the aforementioned MVM can be executed in one step by applying one read pulse~\cite{zeng2023realizing}.
 
\subsubsection{Conductance programming}
Consider the conductance update process via a train of programming pulses.  Let $G_{\rm on}$ and $G_{\rm off}$ be the maximum and minimum conductance, respectively. Then, we adopt the behavioral model reported in~\cite{zeng2023realizing}, which characterizes the conductance change for each programming pulse as follows:
\begin{equation}
    \Delta G=\frac{G_{\mathrm{on}}-G_{\mathrm{off}}}{N_p}+n_w,
\end{equation}
where $N_p$ represents the number of programming pulses corresponding to programming the conductance from $G_{\rm off}$ to $G_{\rm on}$, and the per-cycle programming noise is modeled as $n_w \sim \mathcal{N} (0,\sigma _{{\Delta}g}^2)$. The C2C variation ${\sigma _{{\Delta}g}}$ quantifies the fluctuation in conductance change induced by each programming pulse, and is expressed as a percentage of the full conductance range~\cite{yu2018neuro}:
\begin{equation}
    {\sigma _{{\Delta}g}}={\gamma}(G_{\rm on}-G_{\rm off}),
\end{equation}
where the coefficient $\gamma$ typically falls within the interval (0, 0.05)~\cite{fu2022level}. This write-without-verification programming method enables high-speed programming while incurring large conductance variations. By contrast, other memristor non-idealities, such as state drift and limited endurance~\cite{verma2019memory}, exert relatively minor influence. Their effects can be mitigated through advanced device fabrication and material optimization. Hence, we only consider the conductance variations in the following analysis and simulations.

\begin{figure*}[t!]
    \centering
    \includegraphics[width=0.9\textwidth]{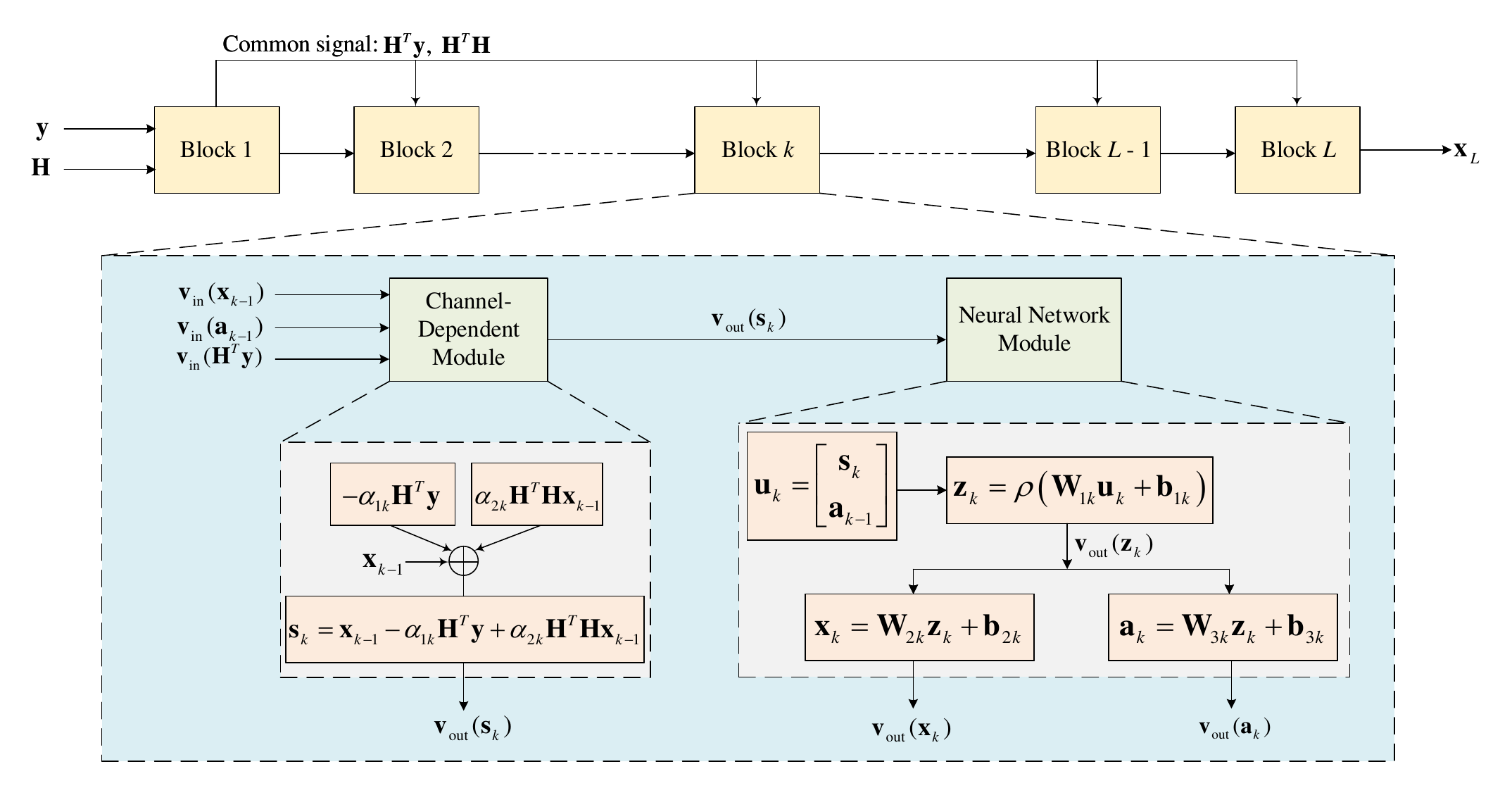} 
    \caption{System diagram of the proposed deep IM-MIMO detector.}
    \label{overview_circuit}
\end{figure*}

\subsection{Performance Metrics}
The performance of the proposed deep IM-MIMO detector is measured using the following three metrics.
\subsubsection{Detection accuracy} Bit error rate (BER) is defined as the ratio of the number of incorrectly received bits to the total number of transmitted bits, and is expressed as ${\rm{BER}} = \frac{{{N_{error}}}}{{{N_{total}}}} \times 100\%$, where $N_{error}$ and $N_{total}$ denote the number of error bits and the number of total bits, respectively. 
For tractability, we consider the output error between the clean and distorted outputs $\mathbf{e}_L=\mathcal{M}(\mathbf{H}+\Delta \mathbf{H},\mathbf{y}_0+\mathbf{n})-\mathcal{M}(\mathbf{H}, \mathbf{y}_0)$ in our analysis, where $\Delta \mathbf{H}$ is the distortion induced by programming noise and $\mathbf{y}_0=\mathbf{Hx}$ represents the received signal without channel noise. 
For codewords of fixed length, a higher error rate leads to a proportional increase in the expected number of bit flips, thereby resulting in a higher BER.
    
\subsubsection{Processing latency} The overall processing latency consists of three key components. One is the programming latency $T_{p}$ for programming the conductance matrices into the crossbar arrays, the computation latency $T_{c}$ for obtaining final outputs, and the latency from analog-to-digital converters (ADCs) and digital-to-analogue converters (DACs) at both ends of the circuit. The total processing latency is given by $T_{total} = T_{p}+T_{c}+T_{ADC}+T_{DAC}$.

\subsubsection{Hardware complexity} The hardware complexity is mainly determined by both the number and size of crossbar arrays, which influence both the number of required memristors and peripheral circuit components. The number of memristors is determined by the product of the number of rows and columns in each array. Meanwhile, crossbar arrays require peripheral circuits for effective operation, e.g., ADCs, DACs, {transimpedance amplifiers} (TIAs), and adders. The number of these components is also dictated by the number and the size of crossbar arrays.

\section{IMC Architecture for Deep IM-MIMO Detection}\label{IMC Architecture}
This section presents the architecture for the proposed deep IM-MIMO detector including the circuit design, and a novel noise-aware training scheme to enhance the detector's robustness against the programming noise.

\subsection{IMC Circuit Design}
As depicted in Fig. \ref{overview_circuit}, the designed deep IM-MIMO detector is composed of two modules according to their functionalities. The channel-dependent module undertakes computations involving the channel matrix, received signal vector, and estimated signal from the preceding block. The neural network module is dedicated to computations related to the neural network. The detailed circuit designs and underlying functionalities of these modules are elucidated as follows.

\subsubsection{Channel-Dependent Module}

\begin{figure*}[!t]
    \centering
    \includegraphics[width=0.65\textwidth]{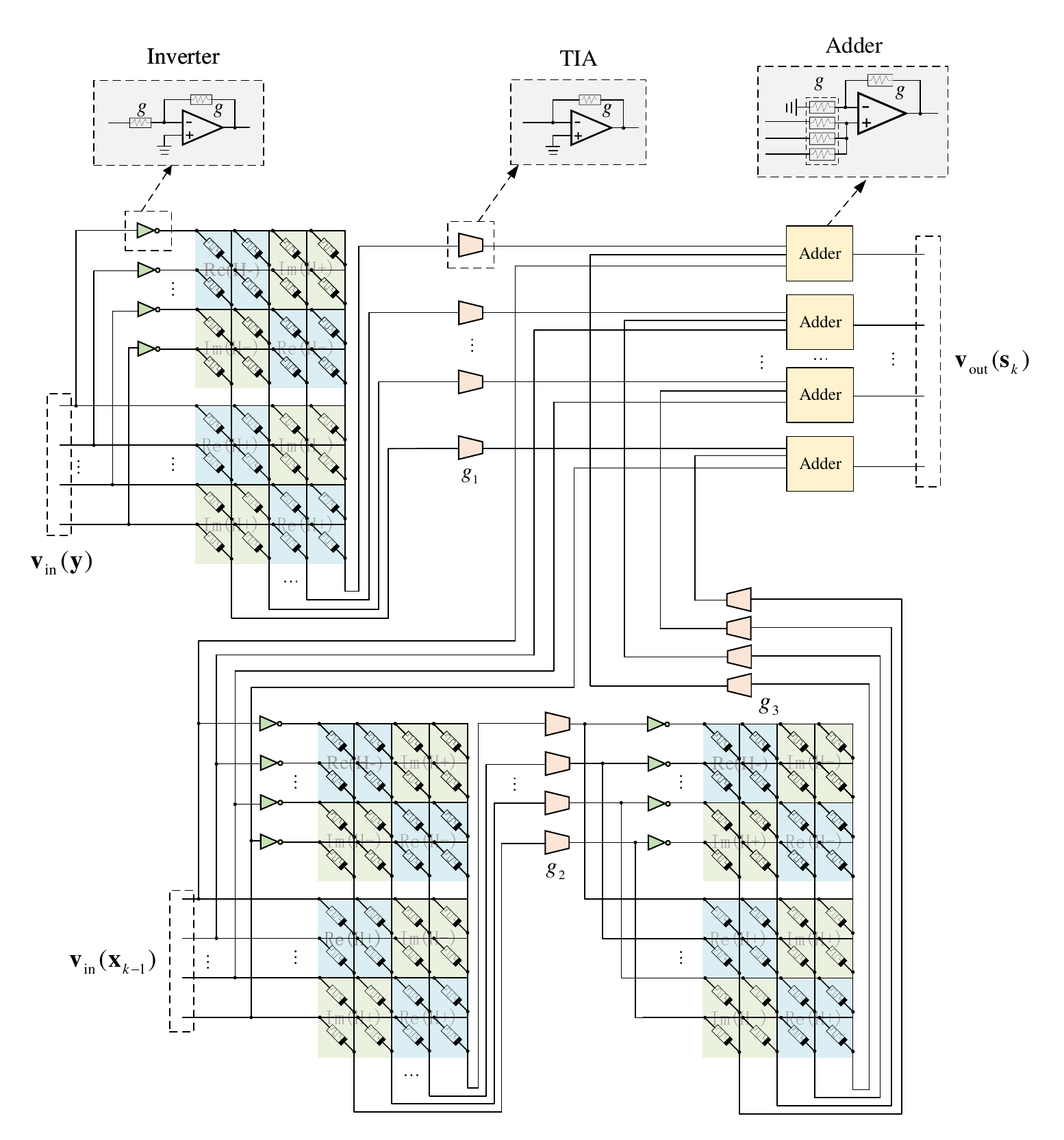} 
    \caption{The proposed sub-circuit for the channel-dependent module.}
    \label{CS_module}
\end{figure*}

The proposed channel-dependent module circuit is depicted in  Fig. \ref{CS_module}\footnote{To ensure clarity, a simplified crossbar array schematic is utilized as a concise yet representative visualization of the 1T1R architecture throughout the subsequent discussions.}. It comprises six $2N_r \times 2N_t$ crossbar arrays. In this configuration, TIAs are employed to cascade pairs of crossbar arrays while converting currents to voltages. This approach eliminates the need for ADCs, which are typically used in conventional architectures to obtain intermediate results (see e.g.,~\cite{jain2020rxnn}). As a result, our design avoids redundant A/D conversions. All TIAs share a similar configuration, differing in their respective feedback conductance $g$. The six crossbar arrays are organized into three groups, where each group comprises one array encoding the conductance matrix $\mathbf{H}^+$ and the other encoding $\mathbf{H}^-$. By utilizing inverters in this differential configuration, the circuit enables the mapping of matrices that include negative elements.

To compute ${-\alpha _{1k}}{\mathbf{H}^{T}}\mathbf{y}$ and ${\alpha _{2k}}{\mathbf{H}^T}\mathbf{H}{\mathbf{x}_{k-1}}$, we utilize the feedback resistors of TIAs to perform vector-scalar multiplication operations. Specifically, $\alpha _{1k}$  is mapped to $r_1=1/g_1$. Similarly, the computation of $\mathbf{v}_{\mathrm{out}}$$({\alpha _{2k}}{\mathbf{H}^T}\mathbf{H}{\mathbf{x}_{k-1}})$ requires passing through two sets of TIAs. Accordingly, $\alpha _{2k}$ is mapped to the product of $r_2=1/g_2$ and $r_3=1/g_3$. By applying Ohm’s law and Kirchhoff’s law, we derive the output voltages as $\bf{v}$$_{\mathrm{out}}$$({-\alpha _{1k}}{\mathbf{H}^{T}}\mathbf{y})=-$$r_1$$\mathbf{i}_{\mathrm{in}}(\mathbf{H}^{T}\mathbf{y})$ and $\mathbf{v}_{\mathrm{out}}$$({\alpha _{2k}}{\mathbf{H}^T}\mathbf{H}{\mathbf{x}_{k-1}})=$$r_2r_3$$\mathbf{i}_{\mathrm{in}}(\mathbf{H}^{T}\mathbf{H}\mathbf{x}_{k-1})$. Since both $\mathbf{y}$ and $\mathbf{H}$ remain constant across different blocks, the crossbar arrays corresponding to $\mathbf{H}$ can be reused.

Finally, $\mathbf v_{\mathrm{out}}({-\alpha _{1k}}{\mathbf{H}^{T}}\mathbf{y})$, $\mathbf v_{\mathrm{out}}({\alpha _{2k}}{\mathbf{H}^T}\mathbf{H}{\mathbf{x}_{k-1}})$, and $\mathbf{v}_{\mathrm{in}}(\mathbf{x}_{k-1})$ are summed using adders to compute $\mathbf v_{\mathrm{out}}(\mathbf{s}_k)$. The structure of the adder, designed to calculate the sum of three values, is depicted in Fig. \ref{CS_module} as well. All resistors within the adder have the same conductance $g_0$, ensuring that $\mathbf{v}_{\mathrm{out}}(\mathbf{s}_k)=\mathbf{v}_{\mathrm{out}}({-\alpha _{1k}}{\mathbf{H}^{T}}\mathbf{y})+\mathbf{v}_{\mathrm{out}}({\alpha_{2k}}{\mathbf{H}^T}\mathbf{H}{\mathbf{x}_{k-1}})+\mathbf{v}_{\mathrm{in}}(\mathbf{x}_{k-1})$.

\subsubsection{Neural Network Module}

\begin{figure*}[!t]
    \centering
    \includegraphics[width=0.9\textwidth]{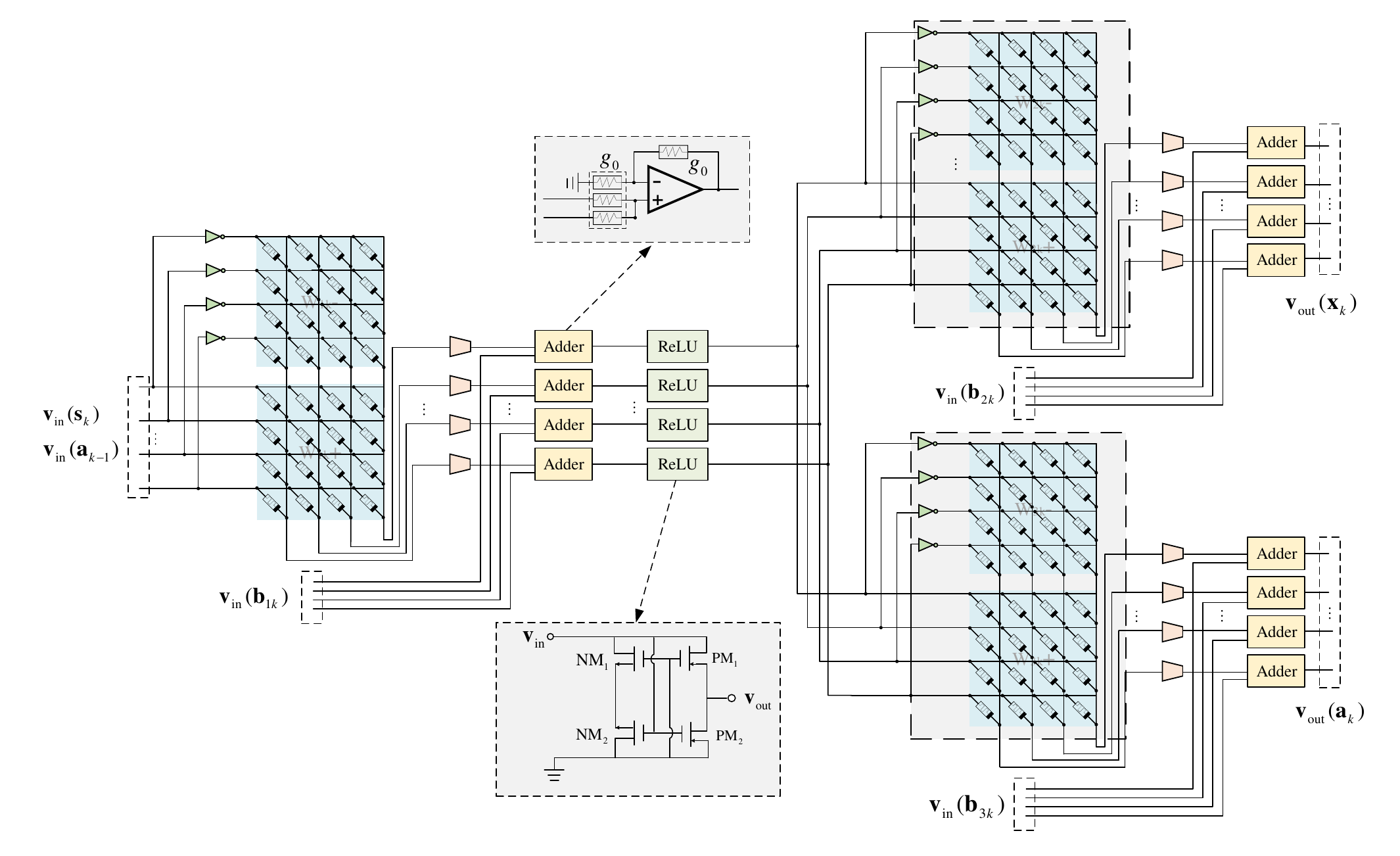} 
    \caption{The proposed sub-circuit for the neural network module.}
    \label{NN_module}
\end{figure*}

The proposed neural network module circuit is shown in Fig.~\ref{NN_module}. It is designed for implementing the affine transformation $\mathbf{W}\mathbf{x}+\mathbf{b}$ in \eqref{z_k}, \eqref{x_k} and \eqref{a_k}, and the nonlinear activation function ReLU in \eqref{z_k}. The concatenation operation of vectors $\mathbf{s}_k$ and $\mathbf{a}_{k-1}$ is achieved by directly increasing the input dimensionality and aligning them with the corresponding inputs. The two crossbar array groups responsible for (\ref{x_k}) and (\ref{a_k}) computations are designed to operate in parallel to enhance computational efficiency, as indicated by the dashed box annotations. The bias term $\mathbf{b}$ is seamlessly integrated into the computation of  $\mathbf{Wx}$ using dedicated adders. These adders are architecturally analogous to those described in the previous subsection, with the sole modification being their configuration to accommodate only two inputs.

For the ReLU nonlinear activation function referenced in (\ref{z_k}), its implementation in the analog domain is realized via a linear rectifier~\cite{huang2020analog}. Specifically, the linear rectifier employs a designed arrangement of two positive-channel metal-oxide-semiconductor (PMOS) transistors and two negative-channel metal-oxide-semiconductor (NMOS) transistors to realize the element-wise ReLU function. 

The final estimated signal vector, $\mathbf{x}_{L}$, output from the $L$-th block, $\mathbf{W}_{2L}\mathbf{z}_{2L}+\mathbf{b}_{2L}$, is converted to the digital domain using ADCs for subsequent demodulation. In this design, the proposed neural network module realizes fully-connected neural network operations, as defined by (\ref{z_k}), (\ref{x_k}) and (\ref{a_k}), based on the aforementioned principles.

\subsection{Noise-Aware Training Design} 
To improve detection accuracy of the proposed deep IM-MIMO detector under the influence of channel noise and crossbar array programming distortion, we propose a noise-aware training design. By leveraging the known statistical models of these non-idealities, this design mimics real-world noise scenarios by injecting statistically representative noise samples during the training process. The details are presented below.

Training is performed offline (i.e., not on the hardware platform) using a noise-aware training scheme that injects target noise during the training process.
The training dataset comprises batches of the transmitted signal vector $\mathbf{x}$, channel matrix $\mathbf{H}$, noise-free received signal vector $\mathbf{y}_0=\mathbf{Hx}$, channel noise $\mathbf{n}$, and channel matrix perturbation $\Delta \mathbf{H}$ induced by programming noise. 
During each training epoch, these components are independently generated according to their respective statistical distributions, with the distribution of $\Delta \mathbf{H}$ detailed in the subsequent section. The loss function is then computed and network parameters updated via backpropagation. To prevent convergence to local optima caused by initial unstable errors, we apply logarithmic weighting to each constituent block within the loss function. This strategy reduces the undue influence of large early-stage errors on optimization direction, leading to the following loss function definition:
\begin{equation}
    \mathcal{L}(\mathbf{x}; \mathbf{\hat x}(\mathbf{H}+\Delta \mathbf{H}, \mathbf{y}_0+\mathbf{n}; \Theta))={\sum\limits_{k = 1}^L {\ln (k)\| {{\mathbf{x}} - {\mathbf{\hat x}_k}} \|} ^2}. \label{loss}
\end{equation}

It is worth noting that the additional overhead of noise-aware training is negligible, as it comprises only random noise generation and element-wise addition to $\mathbf{H}$ during forward propagation. Upon completion of neural network training, the optimized parameters are programmed into the crossbar arrays for hardware deployment. 
Utilizing a programming-with-verification method \cite{zeng2023realizing} with sufficient configuration time enables high-precision weight mapping. During inference, these programmed weights remain static, while the system maintains adaptability to dynamic channel conditions through real-time updates of the time-varying channel matrix $\mathbf{H}$ across successive transmission slots based on channel state information (CSI).

As shown in Fig. \ref{robust}, the proposed training scheme exhibits enhanced robustness, maintaining BER performance stability for C2C variations up to a 2\% threshold. In contrast, the standard training approach without noise injection demonstrates noise robustness only up to 0.75\% C2C variation. This represents a 2.7$\times$ improvement in variation tolerance threshold compared to conventional training methods.

\begin{figure}[!t]
    \centering
    \includegraphics[width=0.8\columnwidth]{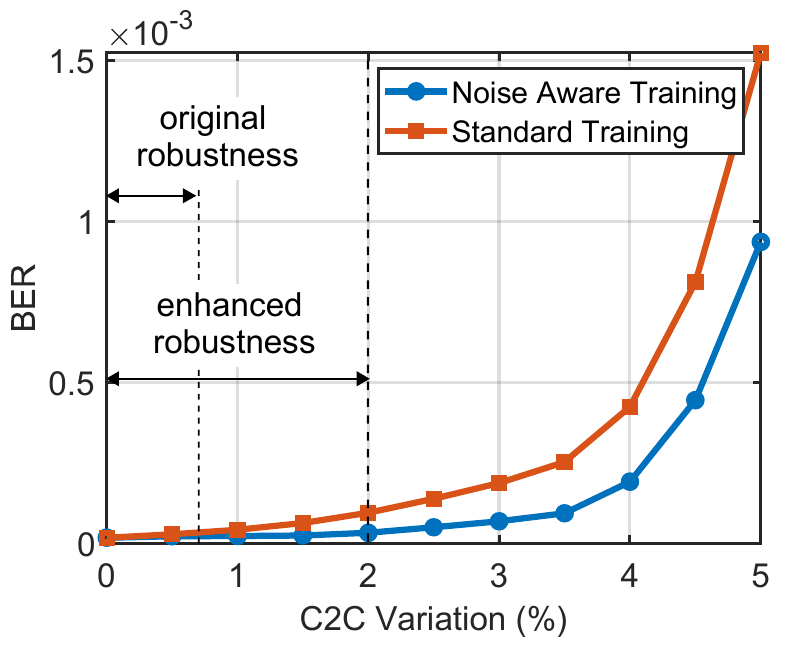} 
    \caption{Comparison of BER performance of two training schemes under different C2C variations.}
    \label{robust}
\end{figure}

\begin{remark}[Benefits from Neural Network Robustness]\rm
Neural networks inherently possess robustness, defined as the ability to maintain performance under noisy inputs within a certain tolerance. This property enables the system to accommodate both the channel matrix degraded by programming noise and received signal vectors corrupted by channel noise. Consequently, within the robustness threshold, additional accuracy compensation strategies (e.g., the programming-with-verification scheme for programming noise mitigation) are rendered unnecessary. 
\end{remark}

\section{Performance Analysis of Deep IM-MIMO Detection}\label{analysis}
In this section, we present a theoretical analysis to characterize the performance of the proposed ultra-low-latency deep IM-MIMO detector in terms of detection error, processing latency, and hardware complexity. 

\subsection{Detection Accuracy}
\subsubsection{Noise Characterization}\label{sec:perturbation_model}
In the context of the proposed deep IM-MIMO detector, we primarily focus on two types of noise, namely, AWGN originating from the wireless channel elaborated in Section \ref{Sec: PRELIMINARIES} and programming noise arising from the programming process of crossbar arrays.
In the channel-dependent module, the three-sigma rule is applied to ensure that each element $h_{ij}$ of $\mathbf{H}$ falls within the designated memristors' conductance range with a confidence level of 99.73\%. Accordingly, we define the mapping coefficient as $\mu  = ({G_{\rm on}} - {G_{\rm off}})/3$, which yields the following transformation: for $h_{ij}<0$, $G^+_{ij}=G_{\rm off}$ and $G^-_{ij}=G_{\rm off}+\mu |h_{ij}|$; for $h_{ij}>0$, $G^+_{ij}=G_{\rm off}+\mu h_{ij}$ and $G^-_{ij}=G_{\rm off}$. For each $h_{ij}$, only one memristor needs to be updated, i.e., $G^+_{ij}$ or $G^-_{ij}$, and the target conductance change is $\mu |h_{ij}|$. 

\begin{lemma} \label{distribution_of_h}\rm
    The programming errors in the channel matrix, $\{\Delta {h_{ij}}\}$, induced by cumulative C2C variations follow an independent but non-identical distribution:
\begin{equation}
    \Delta {h_{ij}}|{h_{ij}}\sim{\cal N}\left( {0,\sigma _{\Delta {h_{ij}}}^2} \right),\label{delta_h}
\end{equation}
where the variance depends on the target channel element as $\sigma _{\Delta {h_{ij}}}^2 = 3{\gamma ^2}{N_p}|{h_{ij}}|$.
\end{lemma}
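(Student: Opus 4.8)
The plan is to track how the per-pulse C2C noise accumulates during the programming of a single memristor and then to propagate the resulting conductance error back to the channel-element domain through the inverse of the mapping $h_{ij}\mapsto(G^+_{ij}-G^-_{ij})/\mu$. First I would determine the effective number of programming pulses $m$ required to realize the target conductance change $\mu|h_{ij}|$. Since each pulse nominally contributes $(G_{\rm on}-G_{\rm off})/N_p$ and $\mu=(G_{\rm on}-G_{\rm off})/3$, setting $m\cdot(G_{\rm on}-G_{\rm off})/N_p=\mu|h_{ij}|$ yields $m=N_p|h_{ij}|/3$.

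Next I would write the accumulated conductance error as a sum of independent per-pulse noise terms. Each pulse $l$ injects $n_{w,l}\sim\mathcal{N}(0,\sigma_{\Delta g}^2)$, and these are mutually independent across pulses, so the total deviation $\sum_{l=1}^{m} n_{w,l}$ is zero-mean Gaussian with variance $m\sigma_{\Delta g}^2$, by the stability of the Gaussian family under convolution. Because only one of $G^+_{ij}$ or $G^-_{ij}$ is updated for each $h_{ij}$, the induced channel-element error is simply $\Delta h_{ij}=\pm(\sum_l n_{w,l})/\mu$, where the sign is immaterial for the variance. Collecting these facts gives $\sigma_{\Delta h_{ij}}^2 = m\sigma_{\Delta g}^2/\mu^2$.

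The final step is substitution and simplification. Plugging in $\sigma_{\Delta g}=\gamma(G_{\rm on}-G_{\rm off})$, $\mu=(G_{\rm on}-G_{\rm off})/3$, and $m=N_p|h_{ij}|/3$, the factors of $(G_{\rm on}-G_{\rm off})^2$ cancel and the numerical constants combine to give $\sigma_{\Delta h_{ij}}^2 = 3\gamma^2 N_p|h_{ij}|$, as claimed. Independence across index pairs $(i,j)$ then follows from the independence of the noise realizations on distinct memristors, while the dependence of the variance on $|h_{ij}|$ accounts for the non-identical nature of the distribution.

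I anticipate the main subtlety to lie in the first step: justifying that $m=N_p|h_{ij}|/3$ may be used as the effective pulse count even though a physical count must be an integer, and confirming that it is the per-pulse noise that accumulates rather than a single noise term applied to the final state. Rounding $m$ to an integer introduces only a negligible $O(1/N_p)$ correction to the variance, so treating $m$ as continuous is justified in the regime of interest. Care is also needed to ensure that the nominal drift term $(G_{\rm on}-G_{\rm off})/N_p$ contributes only to the mean, which is absorbed by the target conductance, leaving the stochastic terms to govern the variance.
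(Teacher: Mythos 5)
Your proposal is correct and takes essentially the same route as the paper: the paper simply invokes the accumulated conductance-error distribution $\Delta G\,|\,G\sim\mathcal{N}\bigl(0,\sigma_{\Delta g}^{2}(G-G_{\mathrm{off}})N_p/(G_{\mathrm{on}}-G_{\mathrm{off}})\bigr)$ from~\cite{zeng2023realizing}, which is exactly what your pulse-counting argument (variance $m\sigma_{\Delta g}^{2}$ with effective pulse count $m=N_p|h_{ij}|/3$) rederives from the per-pulse behavioral model. The remaining steps---propagating the error through the mapping coefficient $\mu=(G_{\mathrm{on}}-G_{\mathrm{off}})/3$ with sign $\pm$ according to whether $G^{+}_{ij}$ or $G^{-}_{ij}$ is updated, and substituting $\sigma_{\Delta g}=\gamma(G_{\mathrm{on}}-G_{\mathrm{off}})$ to cancel the conductance-range factors---coincide with the paper's Appendix~A.
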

\begin{proof}
    See Appendix A.
\end{proof}

\subsubsection{Upper Bound of Detection Error}
To establish the detection error propagation model, we transform $\mathbf{u}_k$ in (\ref{u_k}) into the following equation:
\begin{equation}
\begin{split}
\mathbf{u}_k &= \underbrace{\begin{bmatrix}
\mathbf{I}_{2N_r \times 2N_r} + \alpha_{2k} \mathbf{H}^T \mathbf{H} & \mathbf{0}_{2N_r \times 4N_r} \\
\mathbf{0}_{4N_r \times 2N_r} & \mathbf{I}_{4N_r \times 4N_r}
\end{bmatrix}}_{\mathbf{C}_1}
\begin{bmatrix}
\mathbf{x}_{k-1} \\
\mathbf{a}_{k-1}
\end{bmatrix} \\
&+ \underbrace{\begin{bmatrix}
- \alpha_{1k} \mathbf{H}^T \mathbf{y} \\
\mathbf{0}_{4N_r \times 1}
\end{bmatrix}}_{\mathbf{c}_2}.
\end{split}
\label{transformed}
\end{equation}
By setting ${{\bf{q}}_k} = \left[ {{{\bf{x}}_k^T},{{\bf{a}}_k^T}} \right]^T$, we have ${\mathbf{u}_k} = {\mathbf{C}_1}{\mathbf{q}_{k-1}} + {\mathbf{c}_2}$. By concatenating (\ref{x_k}) and (\ref{a_k}), we have ${{\mathbf{q}}_{k}} = \left[ {{{\mathbf{x}}_{k}^T},{{\mathbf{a}}_{k}^T}} \right]^T = {\mathbf{\hat W}_{2k}}{\mathbf{z}_{k}} + {\mathbf{\hat b}_{2k}}$, with ${\mathbf{\hat W}_{2k}}\stackrel{\Delta}{=}[\mathbf{W}_{2k}^T, \mathbf{W}_{3k}^T]^T$ and ${\mathbf{\hat b}_{2k}}\stackrel{\Delta}{=}[\mathbf{b}_{2k}^T, \mathbf{b}_{3k}^T]^T$. 
The concatenation of two parallel linear operations sharing a common input forms a unified linear transformation that is mathematically equivalent to the original structure. This functional equivalence preserves identical inference behavior in the original model. 
Consequently, the block-wise outputs $\{ {[{\mathbf{x}_k^T},{\mathbf{a}_k^T}]}^T\} _{k = 1}^L $ are generated through a two-layer fully-connected network.

Referring to (\ref{transformed}), the additive noise components $\Delta\mathbf{H}$ in $\mathbf{H}$ and $\mathbf{n}$ in $\mathbf{y}$ induce perturbations in $\mathbf{C}_1$ and $\mathbf{c}_2$, denoted respectively as
\begin{equation}
{ \Delta\mathbf{C}_1} = \left[ {\begin{array}{*{20}{c}}
{{\alpha _{2k}}\mathbf{({H^\textit{T}}} \Delta \mathbf{H} + \Delta \mathbf{{H^\textit{T}}H})}&\mathbf{0}\\
\mathbf{0}&\mathbf{0}
\end{array}} \right],\label{C1}
\end{equation}
\begin{equation}
    \Delta {\mathbf{c}_2} = \left[ {\begin{array}{*{20}{c}}
{ - {\alpha _{1k}}\mathbf{({H^\textit{T}}}\mathbf{n} + \Delta \mathbf{H^\textit{T}y_{\mathrm{0}})}}\\
\mathbf{0}
\end{array}} \right], \label{C2}
\end{equation}
where high-order small quantities $\Delta {\mathbf{H}^T}\Delta \mathbf{H}$ and $\Delta {\mathbf{H}^T}\mathbf{n}$ are negligible. 
The resulting error is composed of two parts. One contribution arises from the error propagation of the preceding block, while the other stems from the perturbations in the current block via $\Delta\mathbf{H}$ and $\mathbf{n}$. In our analysis, the weight and bias terms $\{\mathbf{W}_{1k}, \mathbf{\hat W}_{2k}, \mathbf{b}_{1k}, \mathbf{\hat b}_{2k}\}_{k=1}^L$ are assumed to be error-free, owing to offline fine-tuning of memristors before utilization. In a neural network, the spectral norm of weight matrices quantifies their capacity to scale and propagate errors through the layers, while the bias terms merely introduce an additive offset and do not influence the multiplicative error amplification. Therefore, we omit the explicit contribution of the bias terms for simplicity in the following analysis.
The output error of the $k$-th block, induced by the additional noise components $\Delta\mathbf{H}$ and $\mathbf{n}$, is defined as $\mathbb{E}\left\| {{\mathbf{e}_{k}}} \right\|_2=\mathbb{E}\| \mathcal{M}_k(\mathbf{H}+\Delta \mathbf{H},\mathbf{y}_0+\mathbf{n}, \mathbf{\hat x}_{k-1})-\mathcal{M}_k(\mathbf{H}, \mathbf{y}_0, \mathbf{\hat x}_{k-1}) \|_2$.

\begin{lemma}[Block-wise Error Propagation]\label{lemma2}\rm
    The error propagation of the $k$-th block can be upper bounded by
\begin{equation}
\begin{aligned}
    \mathbb{E}{\left\| {{\mathbf{e}_{k}}} \right\|_2} \leq { } &\mathbb{E}{\left\| {{\mathbf{\hat{W}}_{2k}}{\mathbf{D}_k}{\mathbf{W}_{1k}}}{(\mathbf{C}_1+\Delta\mathbf{C}_1)} \right\|_2} \cdot \mathbb{E}{\left\| {{\mathbf{e}_{k-1}}} \right\|_2} \\
    &+ \mathbb{E}{\left\| {{\mathbf{\hat{W}}_{2k}}{\mathbf{D}_k}{\mathbf{W}_{1k}}}{\Delta\mathbf{C}_1} \right\|_2} \cdot \mathbb{E}{\left\| {{\mathbf{q}_{k-1}}} \right\|_2} \\
    &+ \mathbb{E}{\left\| {{\mathbf{\hat{W}}_{2k}}{\mathbf{D}_k}{\mathbf{W}_{1k}}}\Delta {\mathbf{c}_2} \right\|_2}
\end{aligned} \label{e_k}
\end{equation}
where $\mathbf{q}_{k-1}$ is the noise-free output of the $(k-1)$-th block. $\mathbf{D}_k$ is the diagonal sign matrix with $(\mathbf{D}_{k})_{m,m}=1_{\{\mathbf{W}_{1k}\mathbf{u}_{k}\ge0\}}$, where $k\in\{1,2,\cdots,L\}$ and $m\in\{1,2,\cdots,S\}$.
\end{lemma}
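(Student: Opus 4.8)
The plan is to treat the block map $\mathcal{M}_k$ as the composition of the affine pre-activation $\mathbf{u}_k \mapsto \mathbf{W}_{1k}\mathbf{u}_k + \mathbf{b}_{1k}$, the ReLU $\rho$, and the output affine map $\mathbf{z}_k \mapsto \hat{\mathbf{W}}_{2k}\mathbf{z}_k + \hat{\mathbf{b}}_{2k}$, and to track how $\Delta\mathbf{H}$ and $\mathbf{n}$ enter each stage. First I would cancel the error-free bias: since $\mathbf{q}_k = \hat{\mathbf{W}}_{2k}\mathbf{z}_k + \hat{\mathbf{b}}_{2k}$ in both the noisy and clean computations, the output error collapses to $\mathbf{e}_k = \hat{\mathbf{W}}_{2k}(\mathbf{z}_k^{\mathrm{noisy}} - \mathbf{z}_k^{\mathrm{clean}})$, so $\hat{\mathbf{b}}_{2k}$ drops out, consistent with the preceding remark that biases do not amplify error.

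The second step is to linearize the ReLU. Recalling that the Jacobian of $\rho$ at a point is the diagonal sign matrix, I would write $\mathbf{z}_k^{\mathrm{noisy}} - \mathbf{z}_k^{\mathrm{clean}} = \mathbf{D}_k\mathbf{W}_{1k}(\mathbf{u}_k^{\mathrm{noisy}} - \mathbf{u}_k^{\mathrm{clean}})$ with $(\mathbf{D}_k)_{m,m} = 1_{\{\mathbf{W}_{1k}\mathbf{u}_k \ge 0\}}$. This identity is exact when the perturbation does not flip any activation sign and holds to first order otherwise; it is consistent with the convention in the excerpt of discarding higher-order quantities such as $\Delta\mathbf{H}^T\Delta\mathbf{H}$.

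Third, I would expand the pre-activation difference. Using $\mathbf{u}_k = \mathbf{C}_1\mathbf{q}_{k-1} + \mathbf{c}_2$ together with \eqref{C1} and \eqref{C2}, and writing the noisy previous output as $\mathbf{q}_{k-1} + \mathbf{e}_{k-1}$ (where $\mathbf{q}_{k-1}$ is noise-free), a short calculation gives $\mathbf{u}_k^{\mathrm{noisy}} - \mathbf{u}_k^{\mathrm{clean}} = (\mathbf{C}_1 + \Delta\mathbf{C}_1)\mathbf{e}_{k-1} + \Delta\mathbf{C}_1\mathbf{q}_{k-1} + \Delta\mathbf{c}_2$. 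Substituting back yields the exact recursion $\mathbf{e}_k = \hat{\mathbf{W}}_{2k}\mathbf{D}_k\mathbf{W}_{1k}[(\mathbf{C}_1 + \Delta\mathbf{C}_1)\mathbf{e}_{k-1} + \Delta\mathbf{C}_1\mathbf{q}_{k-1} + \Delta\mathbf{c}_2]$. Applying the triangle inequality and then submultiplicativity of the spectral norm produces the three-term bound sample-wise; taking expectations and factoring each product of norms then gives \eqref{e_k}.

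The main obstacle is the expectation factorization in the first two terms, e.g. bounding $\mathbb{E}[\|\hat{\mathbf{W}}_{2k}\mathbf{D}_k\mathbf{W}_{1k}(\mathbf{C}_1 + \Delta\mathbf{C}_1)\|_2\,\|\mathbf{e}_{k-1}\|_2]$ by the product $\mathbb{E}\|\hat{\mathbf{W}}_{2k}\mathbf{D}_k\mathbf{W}_{1k}(\mathbf{C}_1 + \Delta\mathbf{C}_1)\|_2 \cdot \mathbb{E}\|\mathbf{e}_{k-1}\|_2$. The current-block amplification factor depends on $\Delta\mathbf{H}$ through $\Delta\mathbf{C}_1$ and through the sign pattern $\mathbf{D}_k$, while $\mathbf{e}_{k-1}$ depends on the noise entering earlier blocks; since the programming perturbation $\Delta\mathbf{H}$ is shared across blocks, these factors are not strictly independent. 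I would justify the factorization by invoking the independence of the freshly injected channel noise together with the weak first-order dependence on the shared $\Delta\mathbf{H}$, or, more conservatively, by an independence assumption consistent with the per-epoch independent sampling of $\{\mathbf{H},\mathbf{n},\Delta\mathbf{H}\}$ described in the training scheme. A secondary point requiring care is that $\mathbf{D}_k$ must be pinned to a single sign pattern for the ReLU step to be an equality rather than merely a $1$-Lipschitz bound, which again rests on the small-perturbation regime underlying the entire analysis.
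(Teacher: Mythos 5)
Your proposal follows essentially the same route as the paper's proof in Appendix~B: the same bias cancellation, the same linearization of the ReLU via the diagonal sign matrix $\mathbf{D}_k$, the same expansion $\mathbf{u}_k^{\mathrm{noisy}}-\mathbf{u}_k^{\mathrm{clean}}=(\mathbf{C}_1+\Delta\mathbf{C}_1)\mathbf{e}_{k-1}+\Delta\mathbf{C}_1\mathbf{q}_{k-1}+\Delta\mathbf{c}_2$, and the same triangle-inequality-plus-submultiplicativity step leading to \eqref{e_k}. The two caveats you flag---pinning $\mathbf{D}_k$ to a single sign pattern and factoring $\mathbb{E}[\|\cdot\|_2\,\|\mathbf{e}_{k-1}\|_2]$ into a product of expectations despite the shared $\Delta\mathbf{H}$---are precisely the points the paper passes over silently, so your treatment is, if anything, more explicit about the small-perturbation assumptions underlying the result.
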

\begin{proof}
    See Appendix B.
\end{proof}

The terms $\mathbf{C}_1$, $\Delta\mathbf{C}_1$, and $\Delta\mathbf{c}_2$ in (\ref{e_k}) are primarily derived from $\mathbf{H}$. We establish bounds on the spectral norms of $\mathbf{C}_1$ and $\Delta\mathbf{C}_1$ and the vector norm of $\Delta\mathbf{c}_2$ in the following lemmas.

\begin{lemma}[Scaling factor for the preceding block error]\label{lemma3}\rm
    The terms $\mathbb{E}\|\mathbf{C}_1\|_2$ and $\mathbb{E}\|\mathbf{C}_1+\Delta\mathbf{C}_1\|_2$, serving as scaling factors for the error propagated from the preceding block, are bounded by
    \begin{align}
    \mathbb{E}\|\mathbf{C}_1\|_2&\leq 2\varpi_2 \Phi^2\triangleq \varphi,\\
    \mathbb{E}\|\mathbf{C}_1+\Delta\mathbf{C}_1\|_2 &\leq \varpi_2 \Phi^2\left( \gamma\Phi \sqrt{6 \sqrt{ \frac{2}{\pi}} N_p} + 2 \right)\triangleq \tau,
    \end{align}
    where $\varpi_2=\max \alpha_{2k}$ and $\Phi=\sqrt{N_t} + \sqrt{N_r}$.
\end{lemma}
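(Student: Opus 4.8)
The plan is to reduce both spectral-norm bounds to operator norms of the Gaussian channel $\mathbf{H}$ and its perturbation $\Delta\mathbf{H}$, and then invoke standard Gaussian operator-norm estimates together with Lemma~\ref{distribution_of_h}. Since $\mathbf{C}_1$ is block diagonal, its spectral norm is the larger of the two diagonal blocks; the bottom block is the identity, so $\|\mathbf{C}_1\|_2 = \max\{\|\mathbf{I} + \alpha_{2k}\mathbf{H}^T\mathbf{H}\|_2,\,1\} \le 1 + \varpi_2\|\mathbf{H}^T\mathbf{H}\|_2 = 1 + \varpi_2\|\mathbf{H}\|_2^2$, using the triangle inequality and $\|\mathbf{H}^T\mathbf{H}\|_2 = \|\mathbf{H}\|_2^2$. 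Likewise $\Delta\mathbf{C}_1$ has a single nonzero block, so $\|\Delta\mathbf{C}_1\|_2 = \alpha_{2k}\|\mathbf{H}^T\Delta\mathbf{H} + \Delta\mathbf{H}^T\mathbf{H}\|_2 \le 2\varpi_2\|\mathbf{H}\|_2\|\Delta\mathbf{H}\|_2$ by submultiplicativity. Combined with $\|\mathbf{C}_1+\Delta\mathbf{C}_1\|_2 \le \|\mathbf{C}_1\|_2 + \|\Delta\mathbf{C}_1\|_2$, everything reduces to estimating $\mathbb{E}\|\mathbf{H}\|_2^2$ and $\mathbb{E}[\|\mathbf{H}\|_2\|\Delta\mathbf{H}\|_2]$.

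For the first claim, the real-valued $\mathbf{H}$ has i.i.d. $\mathcal{N}(0,1)$ entries and size $2N_r \times 2N_t$, so Gordon's theorem gives $\mathbb{E}\|\mathbf{H}\|_2 \le \sqrt{2N_r} + \sqrt{2N_t} = \sqrt{2}\,\Phi$, and Gaussian Lipschitz concentration of the largest singular value (variance $\le 1$) upgrades this to $\mathbb{E}\|\mathbf{H}\|_2^2 \le 2\Phi^2$ up to an $O(1)$ additive term. Discarding that lower-order constant together with the order-one identity contribution yields $\mathbb{E}\|\mathbf{C}_1\|_2 \le 2\varpi_2\Phi^2 = \varphi$.

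For the second claim I would control $\mathbb{E}\|\Delta\mathbf{H}\|_2$ by conditioning on $\mathbf{H}$ first. Relaxing the spectral norm to the Frobenius norm and applying Jensen, Lemma~\ref{distribution_of_h} gives $\mathbb{E}_{\Delta\mathbf{H}\mid\mathbf{H}}\|\Delta\mathbf{H}\|_2 \le \big(\sum_{ij}\sigma_{\Delta h_{ij}}^2\big)^{1/2} = \gamma\sqrt{3N_p}\,\big(\sum_{ij}|h_{ij}|\big)^{1/2}$. Taking the outer expectation and applying Cauchy--Schwarz, $\mathbb{E}\big[\|\mathbf{H}\|_2(\sum_{ij}|h_{ij}|)^{1/2}\big] \le (\mathbb{E}\|\mathbf{H}\|_2^2)^{1/2}(\mathbb{E}\sum_{ij}|h_{ij}|)^{1/2}$, where $\mathbb{E}|h_{ij}| = \sqrt{2/\pi}$ (half-normal mean) over the $4N_rN_t$ entries makes the second factor $2(N_rN_t)^{1/2}(2/\pi)^{1/4}$. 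Substituting $\mathbb{E}\|\mathbf{H}\|_2^2 \le 2\Phi^2$ and the elementary inequality $\sqrt{N_rN_t} \le \Phi^2/4$ (AM--GM on $N_r + N_t \ge 2\sqrt{N_rN_t}$) collapses the product into a multiple of $\Phi^3$, and the constants combine to exactly $\varpi_2\gamma\Phi^3\sqrt{6\sqrt{2/\pi}N_p}$. Adding the $2\varpi_2\Phi^2$ bound on $\mathbb{E}\|\mathbf{C}_1\|_2$ and factoring out $\varpi_2\Phi^2$ delivers $\tau$.

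The main obstacle is the second moment of $\|\Delta\mathbf{H}\|_2$: because Lemma~\ref{distribution_of_h} makes each $\Delta h_{ij}$ heteroscedastic with variance proportional to $|h_{ij}|$, the perturbation is statistically coupled to $\mathbf{H}$, so one cannot simply factor $\mathbb{E}[\|\mathbf{H}\|_2\|\Delta\mathbf{H}\|_2]$ into a product of independent expectations. The conditioning-then-Frobenius-then-Cauchy--Schwarz route is precisely what decouples the two factors while keeping the dependence on $|h_{ij}|$ explicit. A secondary, bookkeeping concern is that passing from the first moment $\mathbb{E}\|\mathbf{H}\|_2$ (Gordon) to the second moment $\mathbb{E}\|\mathbf{H}\|_2^2$ needs a concentration step, and the clean constants in the statement emerge only after dropping $O(1)$ lower-order terms.
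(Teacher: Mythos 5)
Your proposal is correct and reaches the stated constants, but it departs from the paper's Appendix C at the one step where the paper is shaky, so the comparison is worth making explicit. The paper follows the same skeleton you use---triangle inequality on $\mathbf{C}_1+\Delta\mathbf{C}_1$, submultiplicativity giving the cross term $2\|\mathbf{H}\|_2\|\Delta\mathbf{H}\|_2$, the spectral-to-Frobenius relaxation of $\|\Delta\mathbf{H}\|_2$ fed by Lemma~\ref{distribution_of_h}, the Gaussian bound $\mathbb{E}\|\mathbf{H}\|_2\le\sqrt{2N_t}+\sqrt{2N_r}$, and the implicit AM--GM step $\sqrt{N_tN_r}\le\Phi^2/4$---but at the coupling step it simply writes $\mathbb{E}\|\mathbf{H}^T\Delta\mathbf{H}\|_2\le\mathbb{E}\|\mathbf{H}\|_2\,\mathbb{E}\|\Delta\mathbf{H}\|_2$ and $\mathbb{E}\|\mathbf{H}^T\mathbf{H}\|_2\le\mathbb{E}\|\mathbf{H}\|_2\,\mathbb{E}\|\mathbf{H}\|_2$, treating the factors as independent even though Lemma~\ref{distribution_of_h} makes $\mathrm{Var}(\Delta h_{ij}\mid h_{ij})$ proportional to $|h_{ij}|$, and even though $(\mathbb{E}\|\mathbf{H}\|_2)^2\le\mathbb{E}\|\mathbf{H}\|_2^2$ by Jensen, so the second factorization points the wrong way as an upper bound. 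Your conditioning-then-Frobenius-then-Cauchy--Schwarz route repairs exactly these two defects: conditional Jensen gives $\mathbb{E}[\|\Delta\mathbf{H}\|_F\mid\mathbf{H}]\le\gamma\sqrt{3N_p}\,(\sum_{ij}|h_{ij}|)^{1/2}$, Cauchy--Schwarz legitimately decouples the product, and since $\mathbb{E}\sum_{ij}|h_{ij}|=4N_tN_r\sqrt{2/\pi}$ injects the same half-normal moment the paper uses unconditionally, you recover the paper's constant $\gamma\Phi^3\sqrt{6\sqrt{2/\pi}\,N_p}$ exactly. What the paper's shortcut buys is brevity; what yours buys is a valid cross-term bound at no cost in the final constants.

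One caveat applies to both arguments equally: the lemma's constants silently absorb lower-order terms. The identity blocks of $\mathbf{C}_1$ contribute an additive $1$ that neither derivation retains, and passing from $\mathbb{E}\|\mathbf{H}\|_2$ to the second moment $\mathbb{E}\|\mathbf{H}\|_2^2$ needed for $\varphi$ and for your Cauchy--Schwarz step costs an additive $O(1)$ (the largest singular value is $1$-Lipschitz in the Gaussian entries, so its variance is at most one). Strictly speaking, neither your proof nor the paper's establishes the displayed inequalities with the exact stated constants; you at least flag the discarded terms explicitly, whereas the paper's independence factorization hides them. This is a defect inherited from the statement, not a gap specific to your argument.
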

\begin{proof}
    See Appendix C.
\end{proof}

\begin{lemma}[Scaling factor in the current block]\label{lemma4} \rm
The term $\mathbb{E}\|\Delta \mathbf{c}_2\|_2$ serves as a scaling factor for the error arising from the current block, and it satisfies the following inequality 
\begin{equation}
\begin{split}
     &\mathbb{E}\|\Delta \mathbf{c}_2\|_2 \\
     &\le  2\varpi_1 \sqrt{N_t+N_r}\left( \sigma_n \sqrt{2 N_r} + \gamma  (N_t+N_r) \sqrt{3 \sqrt{\frac{2}{\pi}} N_p}\right)\\
     &\triangleq \xi,
\end{split}
\end{equation}
where $\varpi_1=\max \alpha_{1k}$.
\end{lemma}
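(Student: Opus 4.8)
The plan is to start from the explicit form of $\Delta\mathbf{c}_2$ in \eqref{C2}. Its lower block vanishes, so $\|\Delta\mathbf{c}_2\|_2 = \alpha_{1k}\,\|\mathbf{H}^T\mathbf{n} + \Delta\mathbf{H}^T\mathbf{y}_0\|_2$. After bounding $\alpha_{1k}\le\varpi_1$ and applying the triangle inequality, the claim reduces to controlling the two expectations $\mathbb{E}\|\mathbf{H}^T\mathbf{n}\|_2$ and $\mathbb{E}\|\Delta\mathbf{H}^T\mathbf{y}_0\|_2$ separately: the first yields the channel-noise summand $\sigma_n\sqrt{2N_r}$ and the second the programming-noise summand $\gamma(N_t+N_r)\sqrt{3\sqrt{2/\pi}N_p}$.

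For the channel-noise term I would use submultiplicativity, $\|\mathbf{H}^T\mathbf{n}\|_2\le\|\mathbf{H}\|_2\|\mathbf{n}\|_2$, together with the independence of the channel $\mathbf{H}$ and the AWGN $\mathbf{n}$, which lets me factor the expectation as $\mathbb{E}\|\mathbf{H}\|_2\cdot\mathbb{E}\|\mathbf{n}\|_2$. Since $\mathbf{n}$ has $2N_r$ real entries of variance $\sigma_n^2$, Jensen gives $\mathbb{E}\|\mathbf{n}\|_2\le(\mathbb{E}\|\mathbf{n}\|_2^2)^{1/2}=\sigma_n\sqrt{2N_r}$. For $\mathbb{E}\|\mathbf{H}\|_2$ I would reuse the Gaussian spectral-norm estimate already invoked in Lemma~\ref{lemma3} (via $\|\mathbf{H}\|_2=\|\tilde{\mathbf{H}}\|_2$ and the Marchenko--Pastur edge), combined with the elementary inequality $\sqrt{N_t}+\sqrt{N_r}\le\sqrt{2(N_t+N_r)}$, to obtain $\mathbb{E}\|\mathbf{H}\|_2\le 2\sqrt{N_t+N_r}$; multiplying the two factors produces the first summand.

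The programming-noise term is the delicate one, because $\Delta\mathbf{H}$ and $\mathbf{y}_0=\mathbf{Hx}$ are both functions of $\mathbf{H}$ and hence positively dependent, so the factorization used above is \emph{not} a valid upper bound here. My plan is to condition on $\mathbf{H}$: then $\mathbf{y}_0$ is deterministic with $\|\mathbf{y}_0\|_2\le\|\mathbf{H}\|_2\|\mathbf{x}\|_2=\|\mathbf{H}\|_2$ (using the unit-power normalization $\|\mathbf{x}\|_2=1$), and $\|\Delta\mathbf{H}^T\mathbf{y}_0\|_2\le\|\Delta\mathbf{H}\|_F\|\mathbf{y}_0\|_2$. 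Applying the conditional law from Lemma~\ref{distribution_of_h} and conditional Jensen gives $\mathbb{E}[\|\Delta\mathbf{H}\|_F\mid\mathbf{H}]\le(\sum_{ij}\sigma_{\Delta h_{ij}}^2)^{1/2}=(\sum_{ij}3\gamma^2N_p|h_{ij}|)^{1/2}$. Taking the outer expectation and separating $\|\mathbf{H}\|_2$ from this conditional bound by Cauchy--Schwarz decouples the two random quantities; then the second-moment estimate $\mathbb{E}\|\mathbf{H}\|_2^2\le 4(N_t+N_r)$ (of the same Gaussian type used in Lemma~\ref{lemma3}) and $\mathbb{E}\sum_{ij}|h_{ij}|=4N_tN_r\sqrt{2/\pi}$ (the $4N_tN_r$ real entries are each $\mathcal{N}(0,1)$, so $\mathbb{E}|h_{ij}|=\sqrt{2/\pi}$), combined with the AM--GM bound $2\sqrt{N_tN_r}\le N_t+N_r$, yield the second summand $2\sqrt{N_t+N_r}\,\gamma(N_t+N_r)\sqrt{3\sqrt{2/\pi}N_p}$.

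Collecting the two summands and restoring the factor $\varpi_1$ gives the claimed bound $\xi$. I expect the main obstacle to be precisely the $\mathbf{H}$-dependence of the programming perturbation in the second term: because the conditional variance in Lemma~\ref{distribution_of_h} scales with $|h_{ij}|$, one cannot treat $\Delta\mathbf{H}$ as a free-standing Gaussian matrix, and naive factorization of $\mathbb{E}\|\Delta\mathbf{H}\|\cdot\mathbb{E}\|\mathbf{y}_0\|$ would underestimate rather than upper-bound the expectation. The conditioning-then-Cauchy--Schwarz step is what legitimately separates the channel-norm growth from the programming-noise magnitude, while the consistency of the entry count ($4N_tN_r$) and the marginal moment $\mathbb{E}|h_{ij}|=\sqrt{2/\pi}$ under the real-valued reformulation are the bookkeeping details that fix the final constants.
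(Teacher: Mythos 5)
Your proof is correct in substance, arrives at exactly the claimed bound $\xi$, and on the key term it takes a genuinely different---and in fact more rigorous---route than the paper. The paper's Appendix D handles the programming-noise summand by directly factorizing $\mathbb{E}\|\Delta\mathbf{H}^T\mathbf{H}\mathbf{x}\|_2 \le \mathbb{E}\|\Delta\mathbf{H}\|_2\,\mathbb{E}\|\mathbf{H}\|_2\,\mathbb{E}\|\mathbf{x}\|_2$ and then reusing the first-moment estimates $\mathbb{E}\|\Delta\mathbf{H}\|_2 \le 2\gamma\sqrt{3\sqrt{2/\pi}\,N_pN_tN_r}$ and $\mathbb{E}\|\mathbf{H}\|_2 \le \sqrt{2N_t}+\sqrt{2N_r}$ from Lemma~\ref{lemma3}. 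As you correctly diagnose, that factorization is not a valid inequality here: by Lemma~\ref{distribution_of_h} the conditional variance of $\Delta h_{ij}$ grows with $|h_{ij}|$, so $\|\Delta\mathbf{H}\|$ and $\|\mathbf{H}\|$ are positively dependent, and the expectation of their product can exceed the product of expectations (the paper leaves this step unjustified). Your conditioning-then-Cauchy--Schwarz argument---conditional Jensen to get $\mathbb{E}[\|\Delta\mathbf{H}\|_F\mid\mathbf{H}] \le (3\gamma^2N_p\sum_{ij}|h_{ij}|)^{1/2}$, then Cauchy--Schwarz to split $\mathbb{E}\|\mathbf{H}\|_2^2$ from $\mathbb{E}\sum_{ij}|h_{ij}| = 4N_tN_r\sqrt{2/\pi}$---legitimately decouples the two random quantities and lands on the identical constants after the same AM--GM step $2\sqrt{N_tN_r}\le N_t+N_r$, so your version repairs the paper's proof rather than merely reproducing it. The channel-noise summand is handled identically in both arguments, where independence of $\mathbf{H}$ and $\mathbf{n}$ genuinely justifies the factorization.

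One step you should not wave through: the second-moment bound $\mathbb{E}\|\mathbf{H}\|_2^2 \le 4(N_t+N_r)$ is \emph{not} ``of the same Gaussian type used in Lemma~\ref{lemma3}'', which controls only the first moment; $(\mathbb{E}\|\mathbf{H}\|_2)^2 \le (\sqrt{2N_t}+\sqrt{2N_r})^2 \le 4(N_t+N_r)$ does not by itself bound $\mathbb{E}\|\mathbf{H}\|_2^2$. You need, e.g., Gaussian concentration: the spectral norm is a $1$-Lipschitz function of the i.i.d.\ entries, so $\mathrm{Var}(\|\mathbf{H}\|_2)$ is at most an absolute constant and $\mathbb{E}\|\mathbf{H}\|_2^2 \le (\sqrt{2N_t}+\sqrt{2N_r})^2 + c$. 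Since $(\sqrt{2N_t}+\sqrt{2N_r})^2 = 4(N_t+N_r)$ exactly when $N_t=N_r$, the additive constant makes your bound exceed $\xi$ by a lower-order term in the symmetric case, so either state the result with this $O(1)$ correction or absorb it into the slack $2(\sqrt{N_t}-\sqrt{N_r})^2$ when $N_t\neq N_r$. This is a cosmetic fix, not a structural one.
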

\begin{proof}
    See Appendix D.
\end{proof}

Building upon the preceding lemmas, Theorem \ref{Theorem 1} establishes the detection error performance bound for the proposed deep IM-MIMO detector.

\begin{theorem}[Detection error]\label{Theorem 1}\rm
The detection error of the proposed deep IM-MIMO detection is defined as follows
\begin{equation}
\begin{split}
\mathbb{E}\|\mathbf{e}_L\|_2 &\leq \xi + \Gamma + \left[ \xi\cdot \frac{1-\tau L}{1-\tau} + \Gamma\cdot \left( L - \frac{1}{\varphi-1} \right) \right]\Omega \\
&\quad + O(\Omega^2),
\end{split}
\end{equation}
where 
\begin{equation}
    \Omega=\sqrt{\frac{4S}{3N_r}}e^{-S/8},
\end{equation}
\begin{equation}
    \Gamma=\frac{2\varpi_1 \varpi_2 \gamma\Phi^5(\varphi^L-\tau^L)}{(\varphi - 1)(\varphi-\tau)}\sqrt{6\sqrt{\frac{2}{\pi}}N_p}.
\end{equation}
\end{theorem}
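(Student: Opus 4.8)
The plan is to solve the scalar linear recursion furnished by Lemma~\ref{lemma2} in closed form and then expand the result to first order in the small parameter $\Omega$. Writing that bound compactly as $\mathbb{E}\|\mathbf{e}_k\|_2 \le a_k\,\mathbb{E}\|\mathbf{e}_{k-1}\|_2 + b_k$, with propagation gain $a_k=\mathbb{E}\|\mathbf{\hat{W}}_{2k}\mathbf{D}_k\mathbf{W}_{1k}(\mathbf{C}_1+\Delta\mathbf{C}_1)\|_2$ and injection $b_k$ gathering the two current-block terms, the initialization $\mathbf{e}_0=\mathbf{0}$ yields the unrolled solution $\mathbb{E}\|\mathbf{e}_L\|_2 \le \sum_{k=1}^{L}\big(\prod_{j=k+1}^{L}a_j\big)b_k$. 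I would first insert $\mathbb{E}\|\mathbf{C}_1+\Delta\mathbf{C}_1\|_2\le\tau$ and $\mathbb{E}\|\Delta\mathbf{c}_2\|_2\le\xi$ from Lemmas~\ref{lemma3}--\ref{lemma4}, and run a companion recursion for the noise-free signal norm $\mathbb{E}\|\mathbf{q}_k\|_2$, which grows geometrically with ratio $\varphi$ because $\mathbb{E}\|\mathbf{C}_1\|_2\le\varphi$. Feeding $\mathbb{E}\|\mathbf{q}_{k-1}\|_2\sim\varphi^{k-1}/(\varphi-1)$ into the middle injection term and summing it against the $\tau$-weighted propagation is precisely what generates the convolution $\tfrac{\varphi^{L}-\tau^{L}}{\varphi-\tau}$ and the factor $(\varphi-1)^{-1}$ that appear in $\Gamma$.

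The heart of the proof is the bound on the random operator $\mathbf{\hat{W}}_{2k}\mathbf{D}_k\mathbf{W}_{1k}$, which carries all the dependence on the network width $S$ and is the source of $\Omega=\sqrt{4S/(3N_r)}\,e^{-S/8}$. Since $\mathbf{D}_k$ is the diagonal matrix of active-ReLU indicators, the count $K=\sum_{m}(\mathbf{D}_k)_{m,m}$ is a sum of $S$ Bernoulli variables; modeling each unit as firing with probability $\tfrac12$ gives $\mathbb{E}K=S/2$, and a Hoeffding tail bound $\Pr(K\ge \tfrac{3S}{4})\le e^{-2(S/4)^2/S}=e^{-S/8}$ isolates the rare event in which an atypically large fraction of neurons is active. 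On the typical event the gain is $O(1)$; on the rare event I would control it crudely through $\|\mathbf{\hat{W}}_{2k}\mathbf{D}_k\mathbf{W}_{1k}\|_2\le\sqrt{K}\,(\text{entry scale}\sim 1/\sqrt{N_r})$, which contributes the dimensional prefactor $\sqrt{S/N_r}$. Multiplying this worst-case magnitude by the tail probability $e^{-S/8}$ and taking expectation produces the $\Omega$-order correction to the gain.

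With each $a_j$ and $b_k$ written as a base value times $(1+O(\Omega))$, I would substitute into the unrolled sum and linearize via $(1+\Omega)^m=1+m\Omega+O(\Omega^2)$. Collecting the $\Omega$-independent contributions yields the leading term $\xi+\Gamma$, while the $\Omega$-linear contributions furnish the bracketed coefficient, in which $\tfrac{1-\tau^{L}}{1-\tau}$ and the factor $L-(\varphi-1)^{-1}$ arise as derivatives of the $\tau$-geometric sum and of $\varphi^{L}$ and the signal normalizer $(\varphi-1)^{-1}$ with respect to the gain. Keeping careful track of which geometric factor is promoted to the $\Omega$-order coefficient versus retained at leading order is the delicate part of this final assembly, and all products of two or more $\Omega$-factors are swept into $O(\Omega^2)$.

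The step I expect to be the main obstacle is exactly the concentration estimate for $\|\mathbf{\hat{W}}_{2k}\mathbf{D}_k\mathbf{W}_{1k}\|_2$. The trained matrices $\mathbf{\hat{W}}_{2k}$ and $\mathbf{W}_{1k}$ are deterministic yet data-dependent, the sign pattern $\mathbf{D}_k$ is correlated with $\mathbf{W}_{1k}\mathbf{u}_k$, and the same operator reappears inside the signal recursion for $\mathbf{q}_k$; consequently the symmetric-firing assumption and the treatment of $\mathbf{D}_k$ as effectively decoupled from the weights must be justified, or explicitly declared as a modeling hypothesis, with some care. Once this gain bound is in place, the geometric summation and the $\Omega$-expansion are routine bookkeeping.
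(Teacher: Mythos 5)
Your skeleton is essentially the paper's own argument in Appendix~\ref{Appendix E}: unroll the scalar recursion from Lemma~\ref{lemma2} with $\mathbf{e}_0=\mathbf{0}$, run the companion noise-free recursion $\mathbb{E}\|\mathbf{q}_k\|_2 \le \bar a\,\mathbb{E}\|\mathbf{q}_{k-1}\|_2 + \mathbb{E}\|\hat{\mathbf{W}}_{2k}\mathbf{D}_k\mathbf{W}_{1k}\mathbf{c}_2\|_2$ to get the $(\bar a^{k}-1)/(\bar a -1)$ growth, convolve it against the $\bar b$-weighted propagation to produce $(\varphi^L-\tau^L)/\big((\varphi-\tau)(\varphi-1)\big)$, insert the bounds of Lemmas~\ref{lemma3} and~\ref{lemma4}, and Taylor-expand to first order in $\Omega$. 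Up to that point your proposal and the paper coincide.

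The genuine gap is the step you yourself flag: the bound $\mathbb{E}\|\hat{\mathbf{W}}_{2k}\mathbf{D}_k\mathbf{W}_{1k}\|_2 \le 1+\Omega$ with $\Omega=\sqrt{4S/(3N_r)}\,e^{-S/8}$. The paper does not prove this; it imports it wholesale from \cite{zhu2022robustness} under the initialization model of \cite{allen2019convergence}, and your attempted from-scratch reconstruction does not close the gap for three concrete reasons. First, your Hoeffding argument on the active-neuron count $K$ (Bernoulli$(1/2)$ firing, $\Pr(K\ge 3S/4)\le e^{-S/8}$) treats $\mathbf{D}_k$ as independent of $\mathbf{W}_{1k}$, which is false since $\mathbf{D}_k$ is determined by the sign pattern of $\mathbf{W}_{1k}\mathbf{u}_k$; this is exactly the correlation the cited result handles. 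Second, and more damagingly, establishing that the gain is ``$O(1)$ on the typical event'' is strictly weaker than what Theorem~\ref{Theorem 1} needs: the expansion $(\Omega+1)^m = 1+m\Omega+O(\Omega^2)$ that yields the leading term $\xi+\Gamma$ requires the base of the bound to be \emph{exactly} $1$; if the typical-event spectral norm were only bounded by some constant $c>1$, every term would acquire factors of order $c^{L+1}$ and the stated leading order would not survive. Third, the rare-event control $\|\hat{\mathbf{W}}_{2k}\mathbf{D}_k\mathbf{W}_{1k}\|_2 \le \sqrt{K}\cdot(\text{entry scale})$ is not a valid spectral-norm inequality for a product of matrices and does not produce the specific prefactor $\sqrt{4/3}$ in $\Omega$. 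A smaller but real discrepancy: your bracketed first-order coefficient is stated as $(1-\tau^{L})/(1-\tau)$, whereas the theorem's coefficient is $(1-\tau L)/(1-\tau)$, which in the paper arises by differentiating $\frac{1-\tau(\Omega+1)^L}{1-\tau(\Omega+1)}(\Omega+1)$ at $\Omega=0$; your ``derivative of the $\tau$-geometric sum'' description does not reproduce the coefficient as stated, so the final bookkeeping would need to be redone against the exact expression in (\ref{appendix_bound})--(\ref{firstTerm}).
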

\begin{proof}
    See Appendix E.
\end{proof}

Although the result in Theorem \ref{Theorem 1} seems complicated, we can observe some key insights from the asymptotic behavior with respect to different variables.

\begin{itemize}
    \item \textbf{MIMO scaling effect:} The detection error increases monotonically with the MIMO dimension $N\times N$, scaling asymptotically as $O(N^{3L/2})$. This result indicates that while massive MIMO configurations enhance spectral efficiency by leveraging an increased number of antennas to boost spatial multiplexing gain, they simultaneously amplify detection errors.
    \item \textbf{Channel noise effect:} Detection error grows with the noise standard deviation $\sigma_n$. Critically, as $\sigma_n \to 0$, only the term $4N\varpi_1\sigma_n$ in the error bound term $\xi$ vanishes. The persistent components demonstrate that channel noise reduction alone cannot eliminate the dominant error mechanisms.  
    \item \textbf{Memristor distortion effect:} The detection error increases with the C2C variation $\gamma$. The limit $\gamma \to 0$ yields a residual error bound:  $4\varpi_1\sigma_n N[ 1+ \left ( L+\frac{L-1}{8\varpi_2  N} \right )\sqrt{\frac{4S}{3N}} e^{-S/8}]$, proving that even idealized memristor crossbar arrays ($\gamma=0$) remain accuracy-constrained by the communication model.  
    \item \textbf{Crossbar architecture effect:} For the block number $L$, the $O(L\tau^L)$ scaling implies monotonic error reduction for $L > \lfloor -1/\ln \tau \rfloor$ in our design ($\tau < 1$). For the network size $S$, the asymptotic behavior $O(\sqrt{S}e^{-S/8})$ ensures error decreases exponentially with $S$. This theoretical relationship is further validated by the simulation results presented in Fig.~\ref{tradeoff}, which explicitly demonstrate that increasing $L$ and $S$ effectively reduces detection errors. However, these performance enhancements come at the cost of substantial hardware complexity overhead.
\end{itemize}
\begin{figure}[!t]
\centering
\subfloat[BER vs. block number $L$]{\includegraphics[width=0.8\columnwidth]{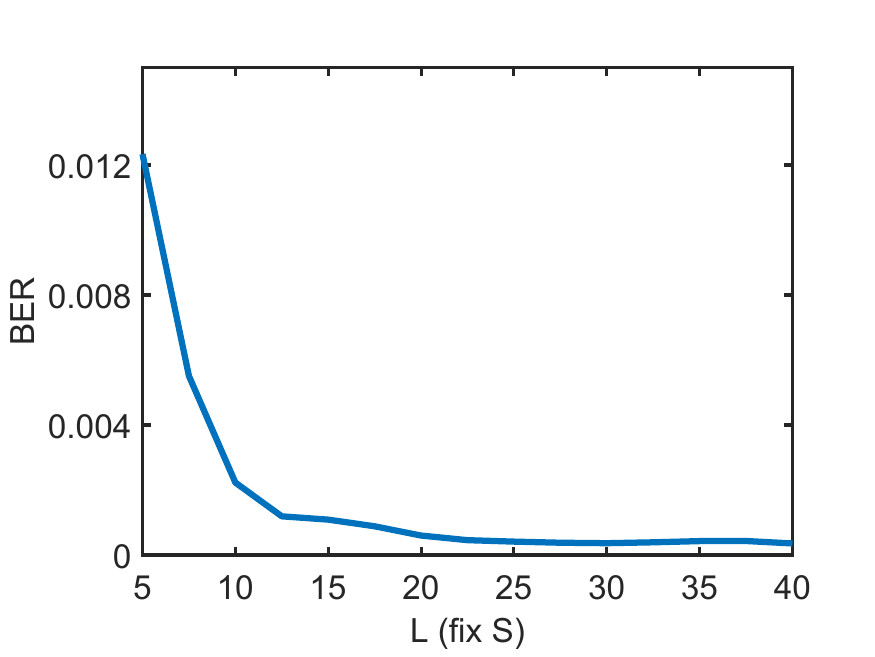}}
\label{block_num}
\hspace{-5mm}
\subfloat[BER vs. network size $S$]{\includegraphics[width=0.8\columnwidth]{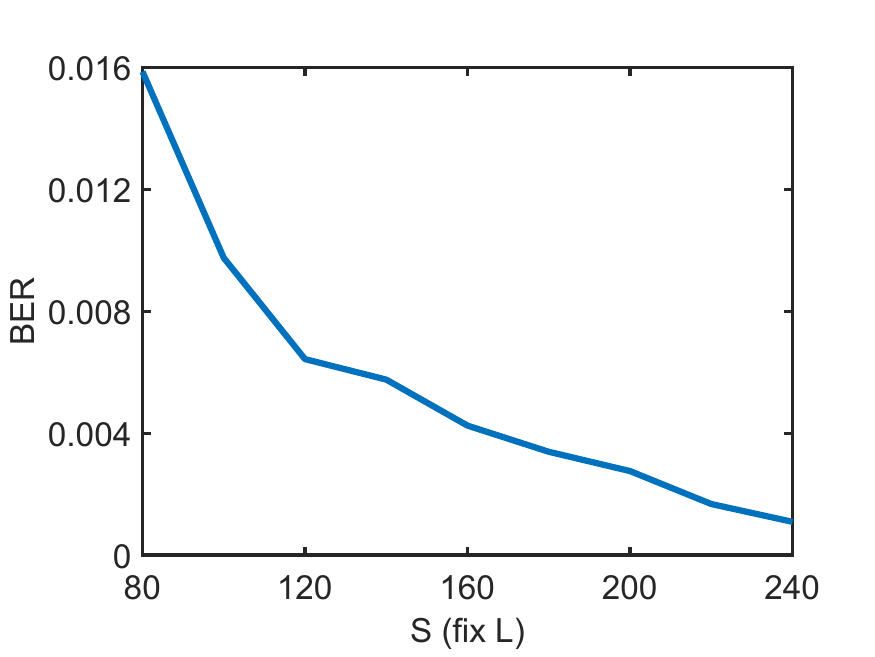}}
\label{network_size}
\caption{Comparison of the BER as a function of block number and network size in a $20\times 30$ Rayleigh fading channel with QPSK modulation. (a) BER vs. block number $L$ with fixed $S=240$. (b) BER vs. network size $S$ with fixed $L=15$.}
\label{tradeoff}
\end{figure}

\subsection{Processing Latency}

\subsubsection{Programming Latency} 
The channel matrix $\mathbf{H}$ is programmed into the crossbar array using a row-by-row programming-without-verification scheme, where WLs are sequentially activated to program individual rows. Following \cite{zeng2023realizing}, the programming latency for each row is determined by the maximum programming latency among all memristors within that row, which is characterized in the following lemma.

\begin{lemma} [\cite{zeng2023realizing}] \label{latency_one_row}\rm
    The expected latency for programming one row of a channel matrix $\mathbf{H}\in\mathbb{R}^{2N_r\times 2N_t}$ into the crossbar array is bounded by 
\begin{equation}
    {\mathbb{E} }[{T_{r}^i}]\le \frac{{\sqrt 2 {G_{\rm on}}{N_p}\Delta {t_w}}}{{3({G_{\rm on}} - {G_{\rm off}})}}\left( {\sqrt {\ln {N_t}}  + \frac{1}{{\sqrt \pi  \ln {N_t}}}} \right).
\end{equation}
\end{lemma}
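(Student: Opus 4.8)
The plan is to separate the row‑programming latency into a deterministic conductance‑to‑time conversion for a single memristor and a probabilistic bound on the worst‑case entry in the row. First I would model the programming time of memristor $(i,j)$. Under the programming‑without‑verification scheme it is driven from $G_{\rm off}$ toward its target by a train of pulses, each contributing a mean conductance increment $(G_{\rm on}-G_{\rm off})/N_p$ (the zero‑mean C2C noise $n_w$ does not shift the expected increment). The target increment is $\mu|h_{ij}|$ with $\mu=(G_{\rm on}-G_{\rm off})/3$, so the expected number of pulses is proportional to $|h_{ij}|$ and the per‑memristor latency is $T_{ij}=c\,|h_{ij}|\,\Delta t_w$ for a deterministic prefactor $c$ fixed by $N_p$, $G_{\rm on}$ and $G_{\rm off}$ (the conductance‑to‑pulse‑count conversion). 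Because $\mathbf{\tilde H}\sim\mathcal{CN}(\mathbf 0,2\mathbf I)$, every real entry $h_{ij}$ is a standard normal $\mathcal N(0,1)$, and the entries within a row are mutually independent.

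Next, since the row latency equals the largest memristor latency, $T_r^i=\max_j T_{ij}=c\,\Delta t_w\,\max_j|h_{ij}|$, taking expectations reduces the claim to bounding $\mathbb{E}[\max_j|h_{ij}|]$, the expected maximum of independent half‑normal variables in one row. I would control this through the tail representation $\mathbb{E}[\max_j|h_{ij}|]=\int_0^\infty \mathbb{P}(\max_j|h_{ij}|>t)\,\d t$, split at the threshold $t_0=\sqrt{2\ln N_t}$: for $t\le t_0$ bound the probability by $1$, and for $t>t_0$ apply the union bound $\mathbb{P}(\max_j|h_{ij}|>t)\le N_t\,\mathbb{P}(|h_{11}|>t)=2N_t Q(t)$, where $Q$ denotes the standard Gaussian tail.

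The tail integral is then estimated with Mills' ratio $Q(t)\le \frac{1}{t\sqrt{2\pi}}e^{-t^2/2}$, which gives $\int_{t_0}^\infty 2N_t Q(t)\,\d t \le \frac{2N_t}{t_0\sqrt{2\pi}}\int_{t_0}^\infty e^{-t^2/2}\,\d t$; bounding the remaining integral by $\tfrac{1}{t_0}e^{-t_0^2/2}$ and using $e^{-t_0^2/2}=1/N_t$ collapses the tail contribution to a term of order $1/\ln N_t$. Assembling the two pieces yields $\mathbb{E}[\max_j|h_{ij}|]\le \sqrt{2\ln N_t}+\frac{\mathrm{const}}{\ln N_t}$, and multiplying by the prefactor $c\,\Delta t_w$ reproduces the stated bound, the $\sqrt{2}$ pairing with $\sqrt{\ln N_t}$.

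The hard part is obtaining the sharp second‑order correction $\frac{1}{\sqrt\pi\,\ln N_t}$ with the correct constant rather than a loose $O(1/\sqrt{\ln N_t})$ remainder: this forces the Mills‑ratio tail estimate (the cruder $Q(t)\le\tfrac12 e^{-t^2/2}$ is insufficient, since it leaves a $1/\sqrt{\ln N_t}$ term) together with the precise split point $t_0=\sqrt{2\ln N_t}$. A secondary bookkeeping point is that a row of $\mathbf{H}$ actually contains independent standard normals from both real and imaginary parts, so one must either absorb $\ln(2N_t)\approx\ln N_t$ into the leading term or argue that only $N_t$ memristors per row need updating under the differential $G^+/G^-$ mapping.
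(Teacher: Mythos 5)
Your argument is essentially the standard derivation of this bound, and it is sound. Note first that the paper itself does not prove Lemma~\ref{latency_one_row}: it is imported by citation from \cite{zeng2023realizing}, and the derivation there is precisely your decomposition — open-loop (without-verification) programming makes the pulse count of cell $(i,j)$ deterministic and proportional to $|h_{ij}|$, the shared word line makes the row latency the row-wise maximum, and the expected maximum of independent half-normals is bounded by splitting at $t_0=\sqrt{2\ln N_t}$ with a union bound plus Mills' ratio. Your tail estimate in fact yields the sharper correction $\frac{1}{\sqrt{2\pi}\,\ln N_t}\le\frac{\sqrt{2}}{\sqrt{\pi}\,\ln N_t}$, which implies the stated one.

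Two bookkeeping caveats. First, your prefactor claim does not follow from your own model: with this paper's mapping $\mu=(G_{\rm on}-G_{\rm off})/3$ and mean per-pulse step $(G_{\rm on}-G_{\rm off})/N_p$, the per-cell latency is $\frac{N_p}{3}|h_{ij}|\Delta t_w$, which lacks the factor $G_{\rm on}/(G_{\rm on}-G_{\rm off})$; that factor stems from the mapping convention of \cite{zeng2023realizing}. Since it exceeds one, your tighter bound still implies the lemma, but you should state this rather than assert the constants match. Second, your $2N_t$-versus-$N_t$ flag is legitimate (the paper inherits the same gloss), and your first repair — absorbing $\ln(2N_t)=\ln N_t+\ln 2$ — is the valid one; the second is not, since the differential mapping still leaves $2N_t$ updated cells per row, one per entry, so the row maximum remains over $2N_t$ variables.
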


The channel matrix $\mathbf{H}$ consists of $2N_r$ rows to be programmed, and the programming process is equivalent to two identical processes, where each is responsible for programming $N_r$ rows. Therefore, the programming latency for one matrix is defined as follows
\begin{equation}
    {T}_{m} = 2{N_r} \times {\mathbb{E} }[ {T_{r}^i} ], \quad i\in\{1,2,\cdots ,N_r\}.
\end{equation}
Given the parallel architecture for computing $\mathbf{H}^T\mathbf{y}$ and $\mathbf{H}^T\mathbf{H}\mathbf{x}_{k-1}$, the total programming latency is determined by the maximum latency of these two components, i.e., $\mathbf{H}^T\mathbf{H}\mathbf{x}_{k-1}$. Considering the sequential programming of the two crossbar array sets in $\mathbf{H}^T\mathbf{H}\mathbf{x}_{k-1}$, the upper bound of the total programming latency is given by
\begin{equation}
    T_{p}\le  \, \frac{4\sqrt{2} G_{\text{on}} N_p \Delta t_w N_r}{3(G_{\text{on}} - G_{\text{off}})} \left( \sqrt{\ln N_t} + \frac{1}{\sqrt{\pi}\ln N_t} \right).\label{LatencyWrite}
\end{equation}
As indicated in (\ref{LatencyWrite}), $T_{p}$ is upper-bounded by the dominant order term $ O(N_r\sqrt{\ln N_t})$. This signifies that the time complexity grows slightly faster than linear with the MIMO scale, representing a low-complexity characteristic.
Fig. \ref{Latency_trend} illustrates the programming latency across different scales of MIMO configurations. The simulation results align with the theoretical upper bound established in (\ref{LatencyWrite}), validating the effectiveness of the analysis. 

\begin{figure}[!t]
    \centering
    \includegraphics[width=0.8\columnwidth]{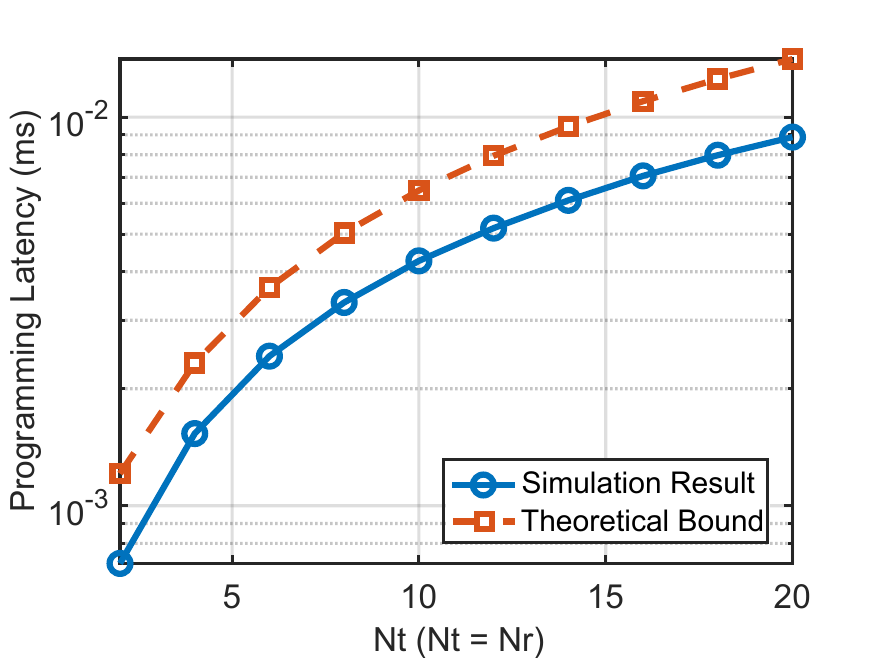} 
    \caption{Comparison of the theoretical bound and simulation results in terms of programming latency under different MIMO scales.}
    \label{Latency_trend}
\end{figure}

\subsubsection{Computation Latency} Computation latency is primarily determined by the circuit's stabilization time, which arises from cumulative delays introduced by circuit components, e.g., memristor-based crossbar arrays, adders, and ReLU circuits. The delays of these components are typically on the order of nanoseconds, a magnitude comparable to the duration of a programming pulse. The computation latency is approximately expressed as 
\begin{equation}
    T_{c}=L(t_{array}+t_{adder}+t_{ReLU}).
\end{equation}

\subsubsection{Total Processing Latency} 
$T_{ADC}$ and $T_{DAC}$ are independent of the circuit scale and fixed at both sides. The computation latency for each circuit block is typically on the nanosecond scale~\cite{zhan2022perovskite}, whereas the programming latency is generally on the order of microseconds. For small block counts $L$, the cumulative computation latency remains negligible, making programming latency dominant in the overall processing latency. However, as the number of blocks increases and the cumulative computation latency reaches the microsecond level, both computation and programming latencies become comparable, leading to a combined scaling of $O(N_r\sqrt{\ln{N_t}})+O(L)$.
\begin{remark}\rm
    To enhance detection accuracy, two major strategies can be considered:
    i) Improved crossbar array programming precision via programming-with-verification. This scheme mitigates the programming noise in memristors at the expense of increased programming latency, whose scaling law increases from $O(N_r\sqrt{\ln N_t})$ to $O(N_r\ln N_t)$~\cite{zeng2023realizing}.
    ii) Network architecture refinement through increased depth $L$ or width $S$. While larger $L$ improves accuracy, it increases computation latency at a scale of $O(L)$. Moreover, larger $S$ enhances accuracy but extends circuit stabilization time, proportionally increasing computation latency.
\end{remark}

\begin{table*}[!t]
\caption{Hardware Cost\label{hardware}}
\centering
\begin{tabular}{|c|c|c|c|}
\hline
Components & Number & Power/Component & Area/Component \\
\hline
Memristor~\cite{luo2022high} & $24N_rN_t +2L[ S(2N_r + 4N_t)+2N_tS+ 4SN_t]$ & 0.56 $\mu$W & 1960 $\mathrm{nm}^2$ \\
\hline
TIAs & $2N_r+4N_t+L(8N_t+2N_r+S)$ & 357 $\mu$W & 663 $\mu\mathrm{m}^2$ \\
\hline
Adder & $L(8N_t+S)$ & 459.2 $\mu$W & 680 $\mu\mathrm{m}^2$ \\
\hline
ReLU circuit~\cite{huang2020analog} & $SL$ & 18.4 nW & 4.88 $\mu\mathrm{m}^2$ \\
\hline
ADC~\cite{choo201627} & 1 & 1.97 mW & 906.3 $\mu\mathrm{m}^2$ \\
\hline
DAC~\cite{huang2017high} & 2$N_r$ & 2.34 mW & 0.19 $\mathrm{mm}^2$ \\
\hline
\end{tabular}
\end{table*}

\subsection{Hardware Complexity}
The hardware complexity analysis focuses on memristor crossbar array dimensions and peripheral circuit requirements. 
The overall hardware cost is summarized in Table \ref{hardware}.
The hardware costs of all circuit components are normalized to the 50 nm technology node used for the memristor array, following Dennard scaling laws~\cite{dennard2003design}.

\subsubsection{Memristor crossbar array} 
In the channel-dependent module, there are six crossbar arrays of size \(2N_r \times 2N_t\), which can be reused across subsequent blocks. 
In the neural network module, there are three groups of crossbar arrays with sizes \(S\times (2N_r + a_{size})\), \(2N_t \times S\), and \(a_{size} \times S\), respectively. 
Here, $a_{size}$ denotes the length of the vector $\mathbf{a}_k$. Accordingly, the total number of memristors is given by $24N_rN_t +2L[ S(2N_r + a_{size})+2N_tS+ Sa_{size}]$. 

\subsubsection{Peripheral circuits} 
The cost of peripheral circuits also scales with the size and number of crossbar arrays. 
For the channel-dependent module, the six crossbar arrays are paired with $4N_t+2N_r$ TIAs, and $2N_t$ adders. Within the neural network module, there are $S+2N_t+a_{size}$ TIAs, $S+2N_t+a_{size}$ adders, and $S$ ReLU circuits. 
The amplifier \cite{feinberg2021analog} with a 500 MHz gain-bandwidth product (GBW) and 60 dB gain is used in adders and TIAs.
There is one ADC~\cite{choo201627} at the end of the circuit for read and $2N_r$ DACs~\cite{huang2017high} to provide input voltages. The ADC in the read process is shared across all the columns of the array.

As revealed in Theorem 1, increasing $L$ and/or $S$ improves the detection accuracy. However, the hardware cost, including power and area, increases with the number of components (see Table~I). Thus, there exists a tradeoff between accuracy performance and hardware complexity.

\section{Simulation Results}\label{results}
In this section, we evaluate the performance of the proposed deep IM-MIMO detector through extensive simulations. We begin with providing comparisons of time complexity with classical MIMO detectors. Following that, we present comparisons with SOTA ASIC implementations and commercial digital processors, followed by an accuracy comparison against conventional detectors.

\subsection{Simulation Settings}
The simulations are conducted on a Python-LTSpice co-simulation framework, where the PyLTSpice toolkit facilitates bidirectional data exchange \cite{brum2024pyltspice}. For different memristor configurations, we adopt a behavioral model that captures the dynamics of memristors through numerical approximations derived from their intrinsic material properties and operational characteristics. The C2C variation coefficient $\gamma$ considered in the simulation is set to $0 \sim 4\%$. The training process involves 25,000 epochs, utilizing the Adam optimizer during backpropagation with a training batch size of 500. The starting learning rate is set to 0.0008. Training is performed within a limited signal-to-noise ratio (SNR) range of 8 dB to 13 dB. During inference, the considered SNR range spans 2 dB to 14 dB. Performance evaluation is conducted under Rayleigh fading channels, considering BPSK, QPSK, and 16QAM for modulation schemes.

\begin{figure}[!t]
    \centering
    \includegraphics[width=0.8\columnwidth]{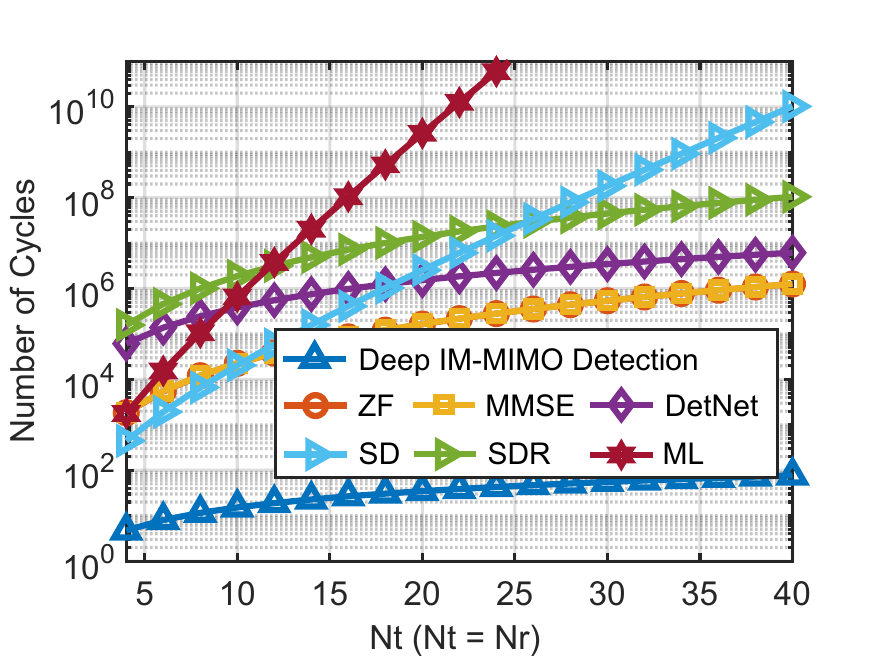} 
    \caption{Time complexity comparison of different detectors as a function of MIMO scales.}
    \label{complexity_figure}
\end{figure}

\subsection{Time Complexity Evaluation}

\begin{table*}[!t]
\centering
\begin{threeparttable}[b]
\caption{Time Complexity Comparisons of Different MIMO Detectors. \label{tab:table1}}
\begin{tabular}{| >{\centering\arraybackslash}m{5cm} | >{\centering\arraybackslash}m{7cm} |}
\hline
MIMO Detector & Time Complexity \\
\hline
ZF / MMSE & $O(N^3)$  \\
\hline
SDR & $O(N^3N_{\rm iter})$ \\
\hline
SD & $O(M^{\beta N})$, $\beta \in (0,1]$ \\
\hline
ML & $O(M^{N})$\tnote{(a)}\\
\hline
DetNet & $O(N^2L)$  \\
\hline
Proposed Deep IM-MIMO Detector & $O(N\sqrt {\ln {N}})+O(L)$ \\
\hline
\end{tabular}

\begin{minipage}{\textwidth}
\centering
\footnotesize
\begin{tabular}{@{}p{0.71\textwidth}}
\textsuperscript{(a)} The time complexity of the ML detector is much higher than that of other detectors, making it impractical under our simulation settings. Thus, we select SD as a near-ML performance benchmark that remains computationally tractable for subsequent comparisons. 
\end{tabular}
\end{minipage}

\end{threeparttable}
\end{table*}

For clarity, we consider a symmetric MIMO configuration where $N_r=N_t=N$. Table \ref{tab:table1} compares the time complexity of various MIMO detectors, which are categorized as follows.
\begin{enumerate}
    \item \textit{Linear detectors} (\textbf{MMSE} and \textbf{ZF}): Their complexity is constrained by matrix inversion, typically exhibiting a complexity of $O(N^3)$.

    \item \textit{Sphere decoder} (\textbf{SD}): The sphere decoder implementation follows \cite{studer2006soft}. Its complexity is primarily governed by search strategies, with exponential scaling $O(M^{N})$ in the worst case, which is equivalent to that of the \textbf{ML} detector. $M$ is the modulation order.

    \item \textit{Semi-definite relaxation} (\textbf{SDR}) detector: It is optimized from ML based on relaxed semi-definite programming algorithm \cite{ma2004semidefinite}. The SDR detector is a suboptimal detection method with a time complexity of $O(N^3N_{\rm iter})$, where $N_{\rm iter}$ denotes the preset number of optimization iterations. 

    \item \textbf{DetNet}: Representing the digital computing counterpart of the proposed deep IM-MIMO detector, DetNet exhibits a time complexity linearly proportional to the number of blocks $L$, specifically $O(N^2L)$.
\end{enumerate}

The linear dependence of DetNet's time complexity on $L$ can be alleviated by employing the one-step analog IMC paradigm. As illustrated in Fig. \ref{complexity_figure}, the time complexity of various detectors is evaluated by quantifying the number of cycles as a function of MIMO scale. In the digital domain, processing time is determined by clock cycles\footnote{We consider the scalar processor to calculate the number of cycles in the digital domain.}, while in the analog domain, each cycle corresponds to a programming pulse. Although the physical implementations of analog and digital domains differ significantly, both abstract time complexity through recurring cycle units. As demonstrated in Fig. \ref{complexity_figure}, the proposed deep IM-MIMO detector achieves the lowest time complexity among the considered detectors. 

\subsection{Processing Efficiency Evaluation}
\subsubsection{Comparison with ASIC Designs}

\begin{table*}[!t]
\caption{Parameters of the Target Memristor's Behavioral Model. \label{device}}
\centering
\begin{tabular}{|c|c|c|c|c|c|c|}
\hline
 Memristor&Pulse Width & C2C Variation & $G_{\rm off}$ & $G_{\rm on}$ & State Number \\
\hline
RRAM \cite{zeng2023realizing} & 10 ns & 4.41\% (P) 5.44\% (D) & 79.93 $\mu S$ & 230.99 $\mu S$ & 256 \\
\hline
FeFET \cite{jerry2017ferroelectric} & 75 ns & 0.5\% & 0.04 $\mu S$ & 1.79 $\mu S$ & 32 \\
\hline
FTJ \cite{luo2022high}&630 ps  & 3.65\% & 1 $\mu S$ & 27.5  $\mu S$& 150\\
\hline
\end{tabular}
\end{table*}

Fig. \ref{Latency_comparison} presents the programming latency comparison of different memristors:
resistive random-access memory (RRAM)~\cite{zeng2023realizing}, ferroelectric field-effect transistor (FeFET)~\cite{jerry2017ferroelectric}, and ferroelectric tunnel junction (FTJ)~\cite{luo2022high}. The relevant parameters are shown in the Table \ref{device}. Notably, the memristor in~\cite{zeng2023realizing} has different C2C variations during potentiation (P) and depression (D), respectively. It is observed that the memristor in~\cite{luo2022high} achieves a better balance between programming latency and detection accuracy and has excellent endurance (over $10^9$ cycles). Therefore, we adopt the memristor in~\cite{luo2022high} as the target memristor device unless specified otherwise.

\begin{figure}[!t]
\centering
\subfloat[BER vs. SNR]{\includegraphics[width=0.8\columnwidth]{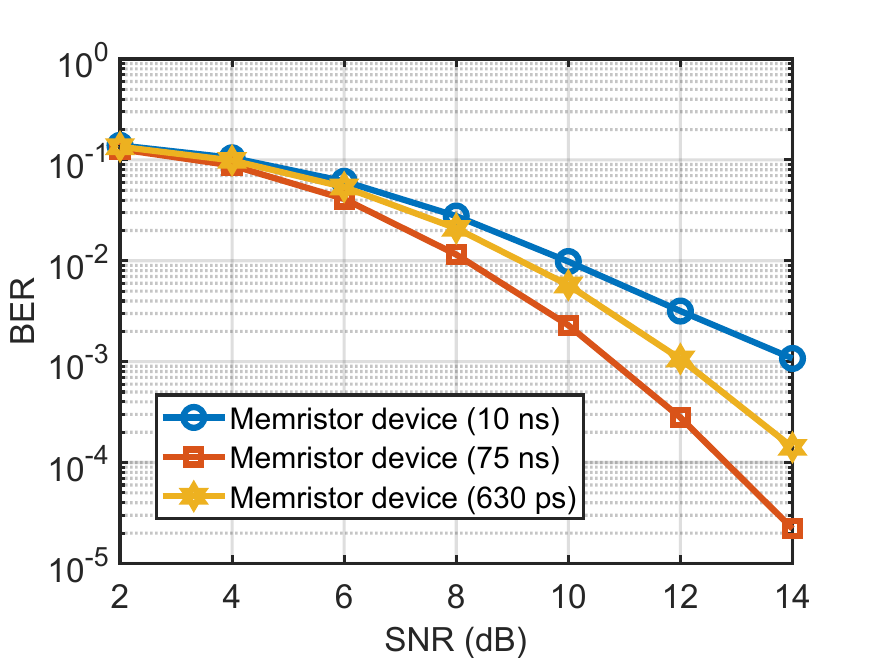}}
\label{ber_snr}
\hspace{-5mm}
\subfloat[Latency vs. MIMO scales]{\includegraphics[width=0.8\columnwidth]{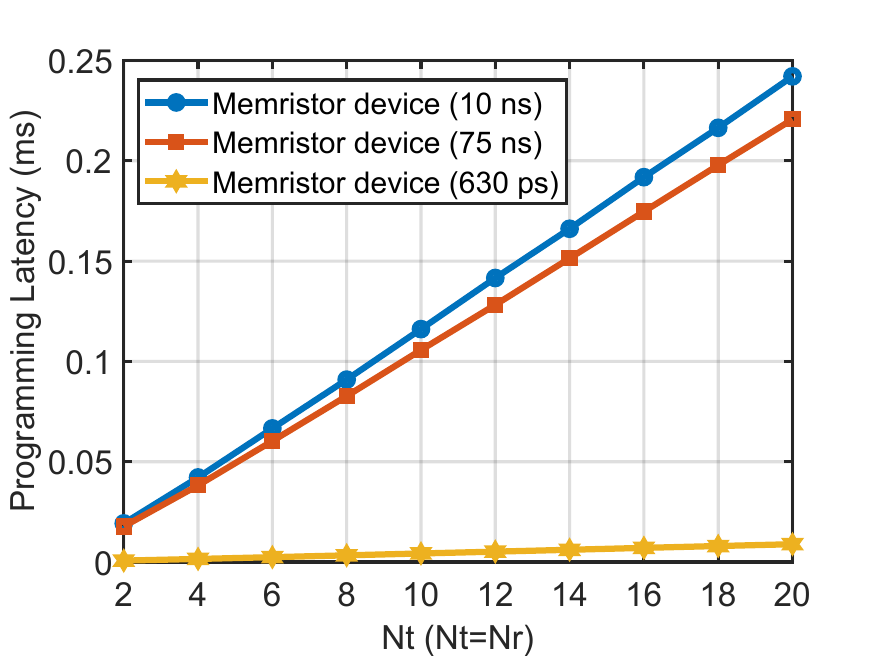}}
\label{latency_size}
\caption{Performance comparisons. (a) BER vs. SNR, $20\times 30$ Rayleigh fading channel with QPSK modulation, $L=30$, $S=480$. (b) Programming latency vs. MIMO scales.}
\label{Latency_comparison}
\end{figure}

\begin{table*}[!t]
    \centering
    \caption{Parameters for MIMO System Simulation}
    \begin{tabular}{|c|c|c|c|c|}
    \hline
       Parameter  & \# Sub-carriers & \# Symbols/frame & Modulation order & \# Tx antennas \\
       \hline
       Notation & $N_c$ & $M$ & $b$ & $N_t$ \\
       \hline
       Value  & 1024 & $160\times 14$ & 4 & 20 \\
       \hline
    \end{tabular}
    \label{parameters}
\end{table*}

To ensure a fair comparison, one frame of a 16-QAM OFDM system with 1,024 subcarriers is considered. 
The simulation parameters are summarized in Table~\ref{parameters}.
The discrete Fourier transform (DFT) module is implemented following the design described in~\cite{zeng2023realizing}. 
The number of floating-point operations (FLOPs) required per symbol is given by $L(4N_t^2+24N_tS+8N_t+2S)$. 
For system parameters $N_t = 20, N_r = 30, L=30, S = 480$, the total computational complexity amounts to 0.535 tera operations (TOPs), while the total number of transmitted bits is $N_c \times M \times b \times N_t = 183500800\,{\rm bits}\approx 0.184$ Gb. 
The estimated energy consumption includes programming energy, Joule heating of memristors, and energy consumption of other components. Joule heating of each device is estimated as 40 fJ.
The total area is estimated to be 38.56 $\rm mm^2$, and total energy consumption is about 0.029 J.
Thus, comparisons of energy efficiency, area efficiency, and throughput between the deep IM-MIMO detector with the SOTA MIMO detectors are summarized in Table~\ref{asic_11_normalized}. 
Notably, the performance results reported for reference designs are acquired from post-silicon characterization measurements, while those of the proposed design in this work are generated from pre-silicon analytical and simulation-based estimation.
Fig. \ref{ASIC_accuracy_compare} presents BER comparisons between DetNet and four representative ASIC detectors over a Rician fading channel (Rician factor $K=3$). 
It can be observed that DetNet consistently outperforms baselines, especially in the high SNR regime. 
Compared with the SOTA ASIC detectors, the deep IM-MIMO detector achieves a throughput improvement ranging from 3.6$\times$ to 159.3$\times$. 
In terms of energy efficiency, it demonstrates 8.1$\times$, 6.8$\times$, and 4.6$\times$ higher efficiency than the designs in~\cite{jeon2019354, li2025deep, prabhu20173}, respectively. Moreover, the proposed design outperforms \cite{jeon2019354} and \cite{lee202540} by 3.9$\times$ and 1.8$\times$ in area efficiency.

\begin{figure}[!t]
        \centering
        \includegraphics[width=0.8\columnwidth]{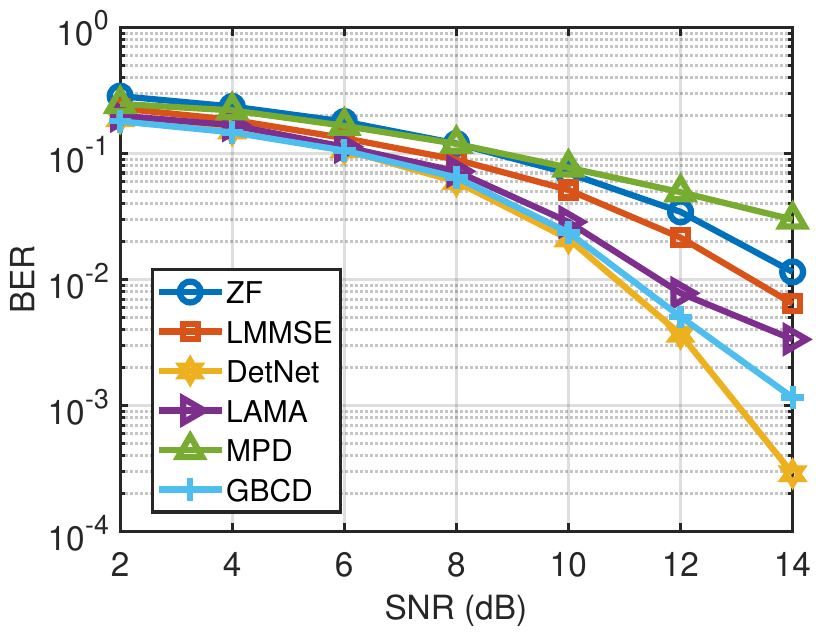} 
        \caption{BER comparisons with SOTA ASIC detectors in a $(N_t, N_r)=(20,30)$ system.}\vspace{4mm}
        \label{ASIC_accuracy_compare}
\end{figure}

\begin{table*}[!t]
        \centering
        \resizebox{\linewidth}{!}{
        \begin{threeparttable}[b]
        \caption{Comparison of SOTA ASIC MIMO Detectors.}
        \label{asic_11_normalized}
        \begin{tabular}{| >{\centering\arraybackslash}m{1.3cm} | >{\centering\arraybackslash}m{1.5cm} | >{\centering\arraybackslash}m{1.4cm} | >{\centering\arraybackslash}m{2.2 cm} | >{\centering\arraybackslash}m{2.0cm} | >{\centering\arraybackslash}m{2.2cm} | >{\centering\arraybackslash}m{2.2cm} | >{\centering\arraybackslash}m{2.2cm} |}
        \hline
         & Technology [nm] & MIMO Size & Modulation & Algorithm & Throughput\tnote{(g)}\quad\quad[Gb/s] & Energy Efficiency\tnote{(g)} [pJ/b] & Area Efficiency\tnote{(g)} [Gbps/$\textrm{mm}^2$] \\
         \hline
        Jeon~\cite{jeon2019354}   & 28 & $256 \times 32$ & 256 QAM                & LAMA\tnote{(a)}   & 0.3   & 1273.52 & 0.318 \\
    \hline
    Prabhu~\cite{prabhu20173} & 28 & $128 \times 8$  & 256 QAM                & ZF/MMSE           & 1.1   & 717.49  & -- \\
    \hline
    Li~\cite{li2025deep}     & 22 & $128 \times 16$ & 4 to 256 QAM           & GBCD\tnote{(b)}   & 13.3  & 1073.57 & 0.756 \\
    \hline
    Tang~\cite{tang20210}   & 40 & $128 \times 32$ & 256 QAM                & MPD\tnote{(c)}    & 2.6   & 116.89  & 3.249 \\
    \hline
    Lee~\cite{lee202540}    & 40 & $256 \times 32$ & 256 QAM                & MPD\tnote{(c)}               & 6.0   & 30.03   & 0.676 \\
    \hline
    Tang~\cite{tang20181}   & 28 & $128\times 16$  & QPSK to 256 QAM        & EPD\tnote{(d)}    & 3.4   & 41.16  & 1.531 \\
    \hline
    Yun~\cite{yun20245}   & 28 & $128\times 32$  & 4 to 256 QAM           & Deep learning     & 5.4   & \textbf{23.44}   & 3.136 \\
    \hline
    Wen~\cite{wen20201}   & 40 & $256\times 32$  & 4 to 256 QAM           & DCD\tnote{(e)}    & 1.8   & 26.40   & \textbf{4.480} \\
    \hline
    This Work & 50 & $20 \times 30$ & BPSK/QPSK, 4 to 16 QAM & PGD\tnote{(f)} & \bf{47.8} & 157.51 & 1.240 \\
        \hline
        \end{tabular}
        
        \begin{minipage}{\textwidth}
        \centering
        \footnotesize
        \vspace{2pt}
        \begin{tabular}{@{}p{0.5\textwidth}@{\hspace{0.001\textwidth}}p{0.5\textwidth}@{}}
        \textsuperscript{(a)} Large-MIMO approximate message passing. & 
        \textsuperscript{(b)} Gram-domain block coordinate descent. \\
        \textsuperscript{(c)} Message-passing detection. &
        \textsuperscript{(d)} Expectation-passing detection.  \\
        \textsuperscript{(e)} Dichotomous coordinate descent.& \textsuperscript{(f)} Projected gradient descent. \\
        \multicolumn{2}{@{}p{\linewidth}@{}}{\textsuperscript{(g)} Normalized to 50 nm technology node and $N_r=30$ receive antennas. 
        First, technology nodes are normalized to 50 nm according to Dennard scaling laws~\cite{dennard1999design}, where both area and power consumption are proportional to $\sim 1/k^2$ ($k$ is the ratio of technology nodes). 
        Then, receive antenna count $N_r$ is normalized to 30: throughput scales linearly with $N_r$, while energy and area scale with $N_r^2$.} \\
        \end{tabular}
        \end{minipage}
        \end{threeparttable}}
        \vspace{10pt}
\end{table*}

\subsubsection{Comparison with Digital Processors}
To provide a comprehensive performance benchmark, three representative digital processing platforms are selected. The Texas Instruments TMS320C6678 multi-core digital signal processor (DSP), as documented in~\cite{fryza2014power}, serves as a high-performance DSP solution. The Xilinx Kintex-7 480T field-programmable gate array (FPGA) is included due to its balanced trade-off among performance, power consumption, and cost, which contains only programmable logic (PL), and the design is described in hardware description language (HDL)~\cite{bertendsp2016gpu}. In addition, the NVIDIA Jetson Orin NX 16GB, leveraging its AI acceleration capabilities in small-form-factor scenarios, is widely adopted in practical AI applications~\cite{jetson_orin_nx_datasheet}.
An extended comparative evaluation summarized in Table~\ref{tab:counterpart} benchmarks the proposed deep IM-MIMO detector against these three digital processors in terms of energy efficiency and throughput, under the same target detection accuracy. Experimental results indicate that the proposed deep IM-MIMO detector achieves significant performance gains: offering 77.4$\times$, 74.1$\times$, and $10^{3}\times$ higher throughput than TMS320C6678, Kintex-7 480T, and NVIDIA Jetson Orin NX 16GB, respectively. In terms of energy efficiency, the proposed design surpasses TMS320C6678, Kintex-7 480T, and NVIDIA Jetson Orin NX 16 GB by factors of $1.1\times10^{4}$, 76.18 and 245.38, respectively.

\begin{table*}[!t]
\caption{Comparison with Digital Processors.\label{tab:counterpart}}
\centering
\begin{tabular}{| >{\centering\arraybackslash}m{3cm} | >{\centering\arraybackslash}m{2.3cm} | >{\centering\arraybackslash}m{2.3cm} |>{\centering\arraybackslash}m{2.3cm} |>{\centering\arraybackslash}m{3.8cm} |}
\hline
& This Work & TMS320C6678 & Kintex-7 480T & NVIDIA Jetson Orin NX 16GB\\
\hline
Energy Efficiency [nJ/b] & 0.158 & 1791.6 & 12.036 & 38.770\\
\hline
Throughput [Gb/s] & 47.8 & 0.0439 & 0.6174 & 0.6448 \\
\hline
\end{tabular}
\end{table*}

\subsection{Detection Accuracy}
\subsubsection{Effect of Modulation Order} 

\begin{figure*}[!t]
\centering
\subfloat[BPSK]{\includegraphics[width=0.333\textwidth]{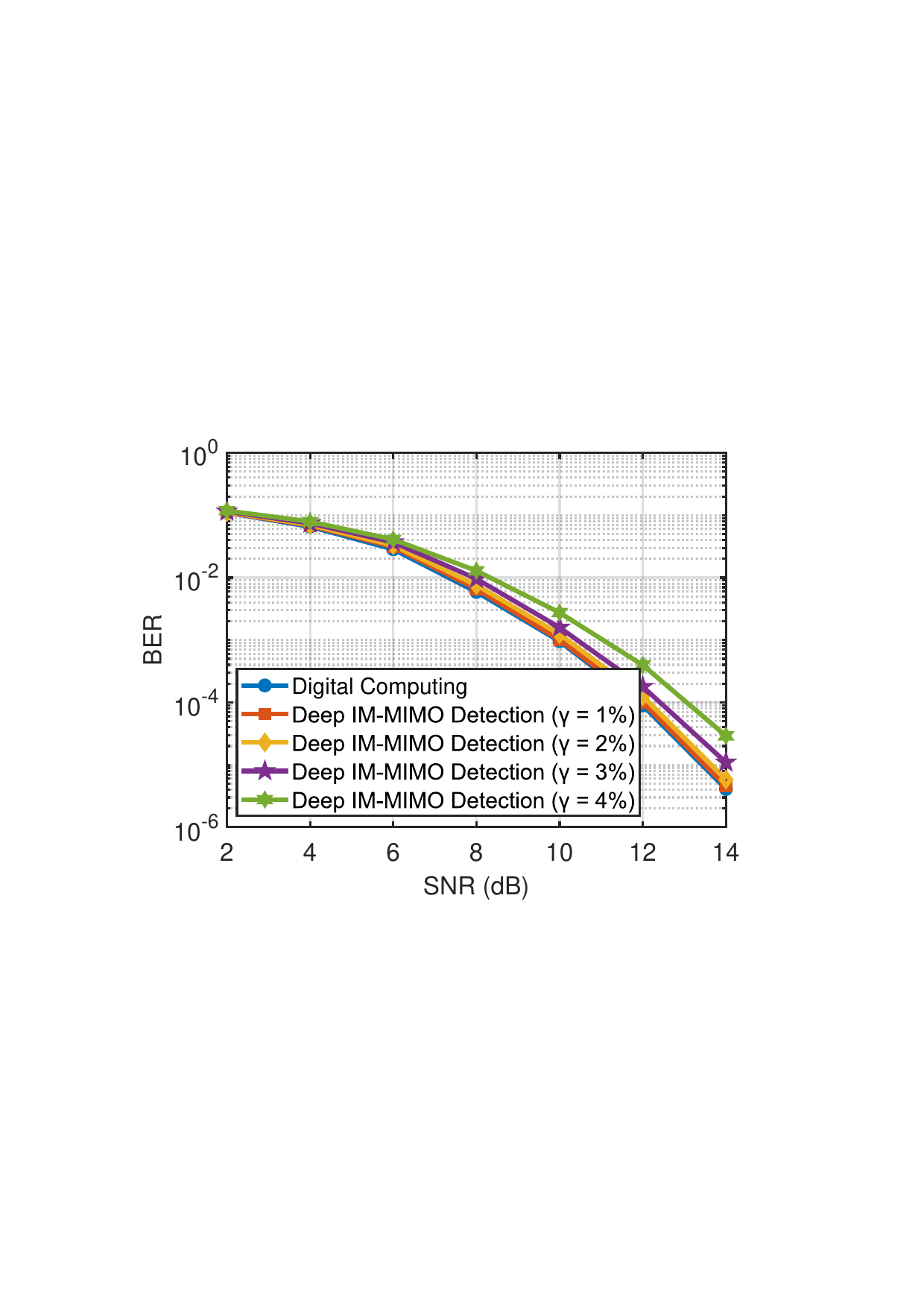}
\label{BPSK}}
\hspace{-3mm}
\subfloat[QPSK]{\includegraphics[width=0.333\textwidth]{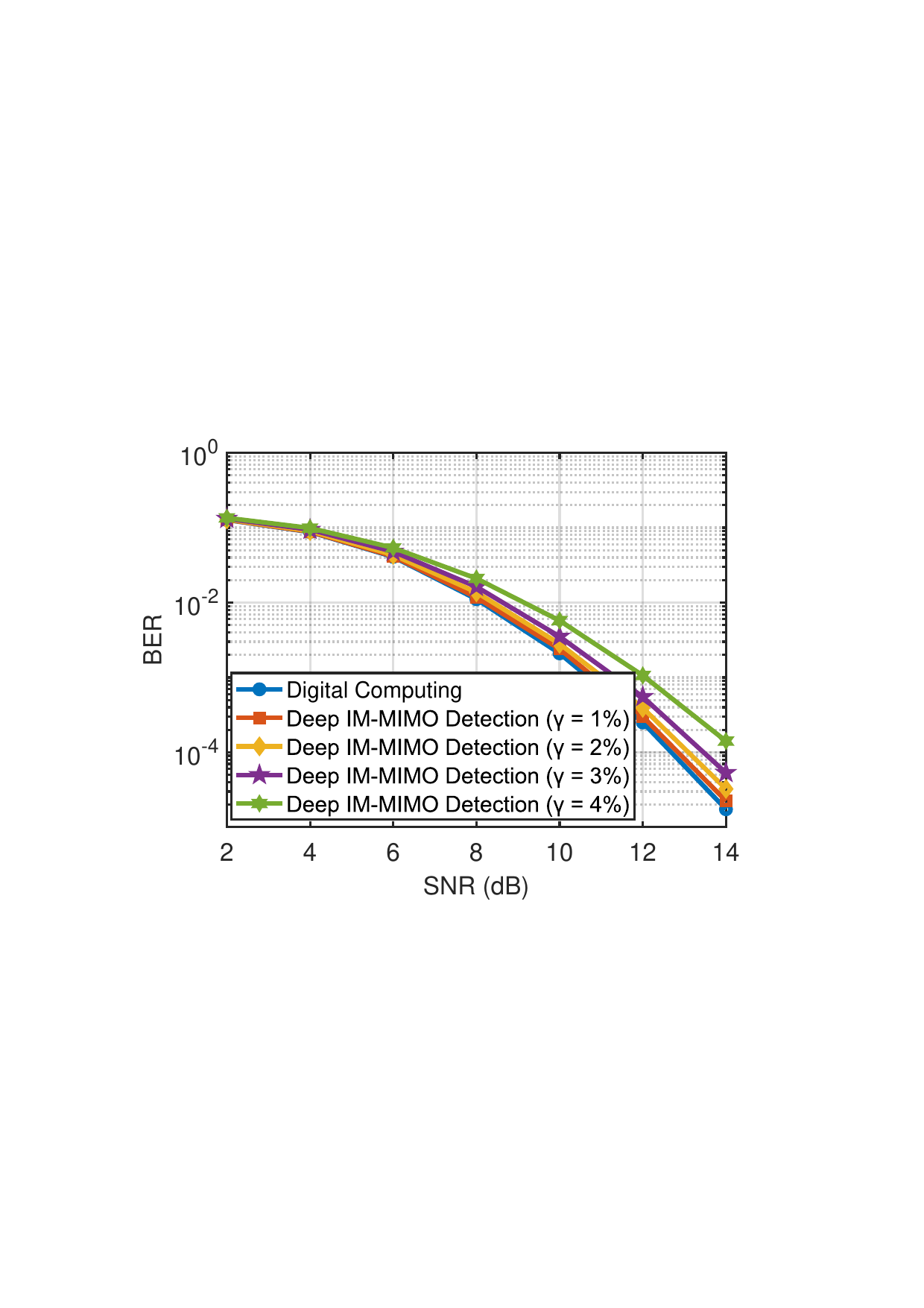}
\label{QPSK}}
\hspace{-3mm}
\subfloat[16QAM]{\includegraphics[width=0.333\textwidth]{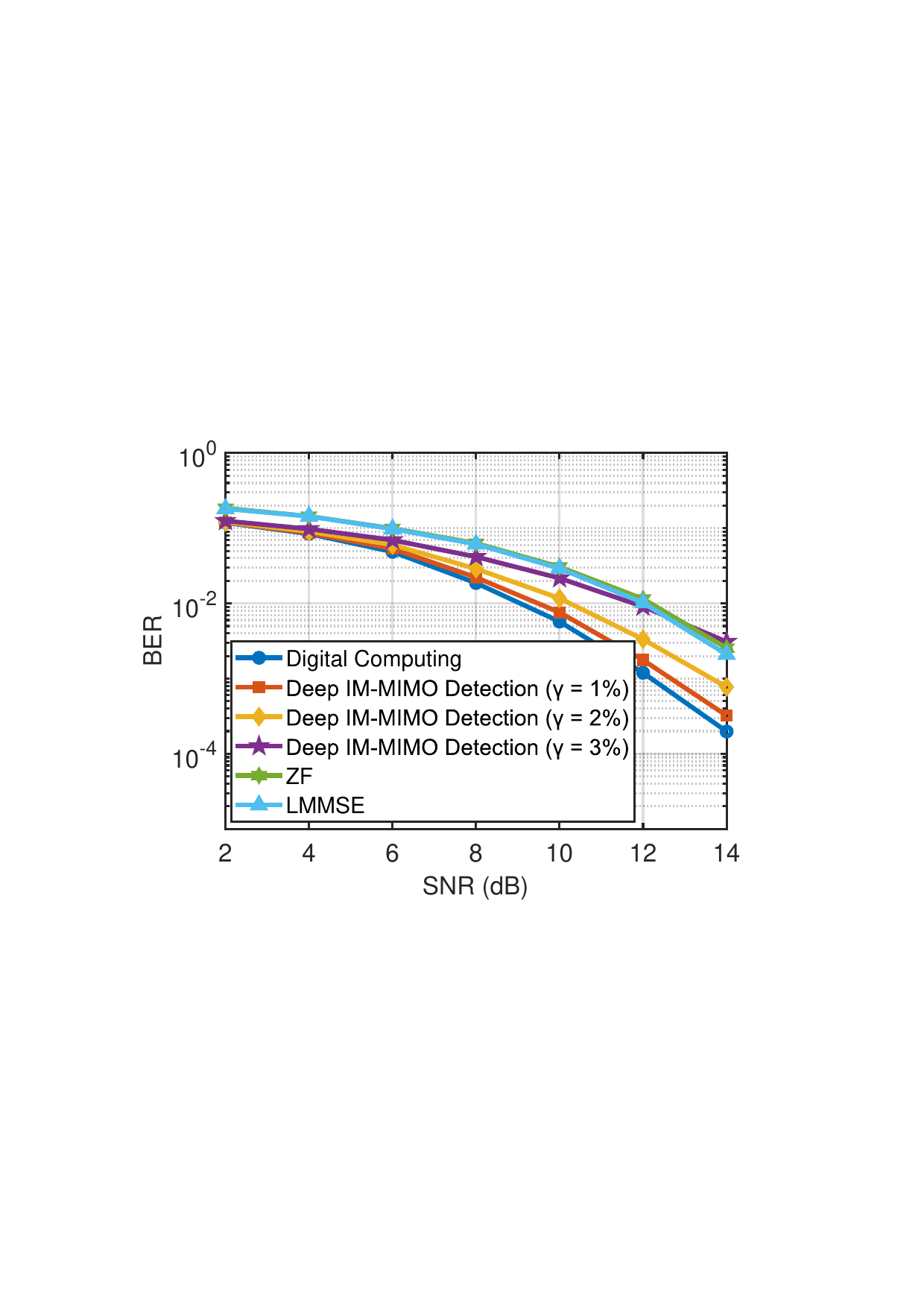}
\label{QAM}}
\caption{BER performance comparisons with the corresponding digital computing counterpart under different modulation orders in a 20 $\times$ 30 Rayleigh fading channel.}
\label{performance_order}
\end{figure*}

As illustrated in Fig. \ref{performance_order}, the BER performance of the proposed deep IM-MIMO detector is evaluated under different C2C variation levels (1\%, 2\%, 3\%, and 4\%), in comparison with the digital computing baseline (without programming noise). It can be observed that when the C2C variation is below 1\%, our design achieves BER performance closely approaching the digital baseline. As the C2C variation increases to 3\%, although there is a trend of BER degradation, our design still demonstrates remarkable robustness, except for the case of $\gamma=3\%$ in 16QAM.

\subsubsection{Effect of Channel Types} 

\begin{figure*}[!t]
\centering
\subfloat[Rayleigh Channel]{\includegraphics[width=0.333\textwidth]{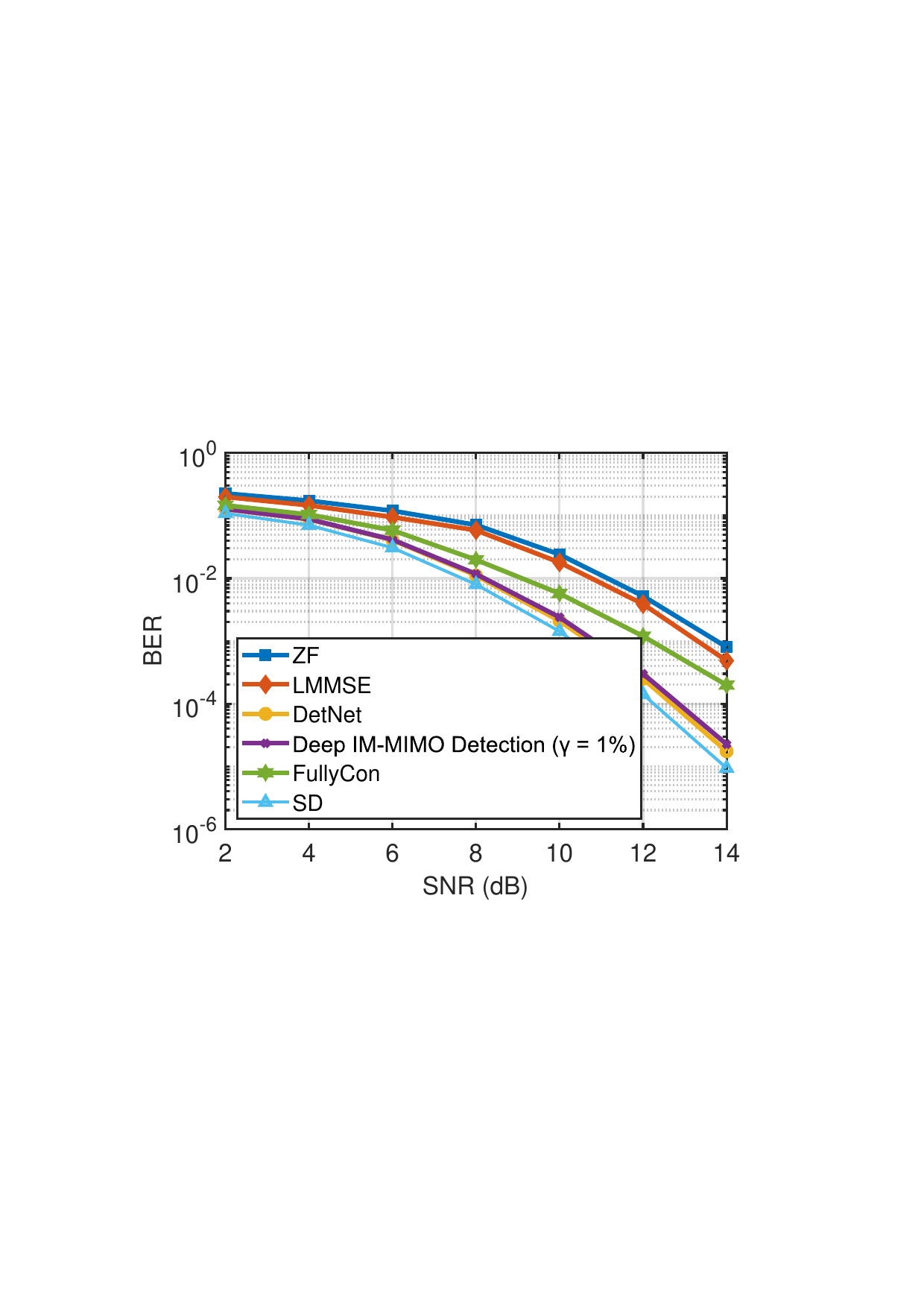}
\label{rayleigh}}
\hspace{-3mm}
\subfloat[Rician Channel]{\includegraphics[width=0.333\textwidth]{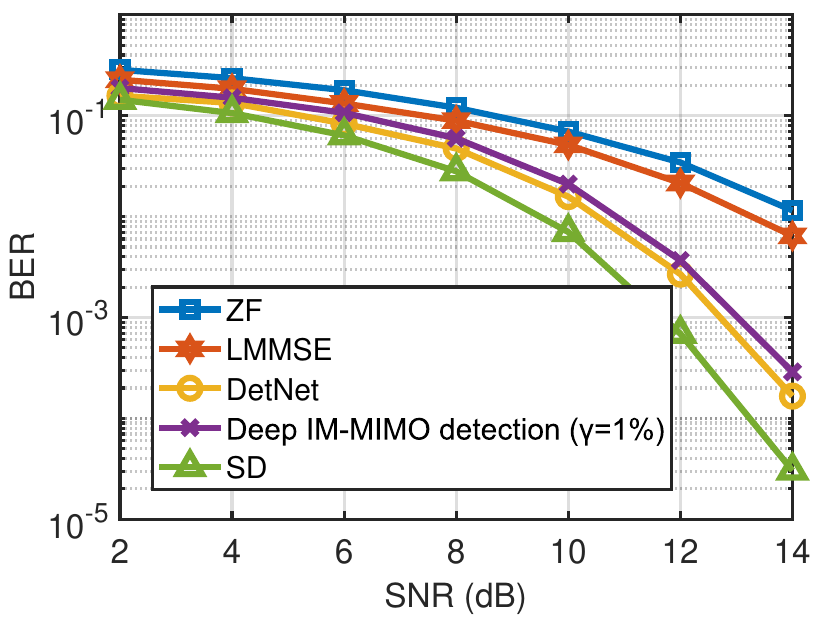}
\label{rician}}
\hspace{-3mm}
\subfloat[Spatially Correlated Channel]{\includegraphics[width=0.333\textwidth]{./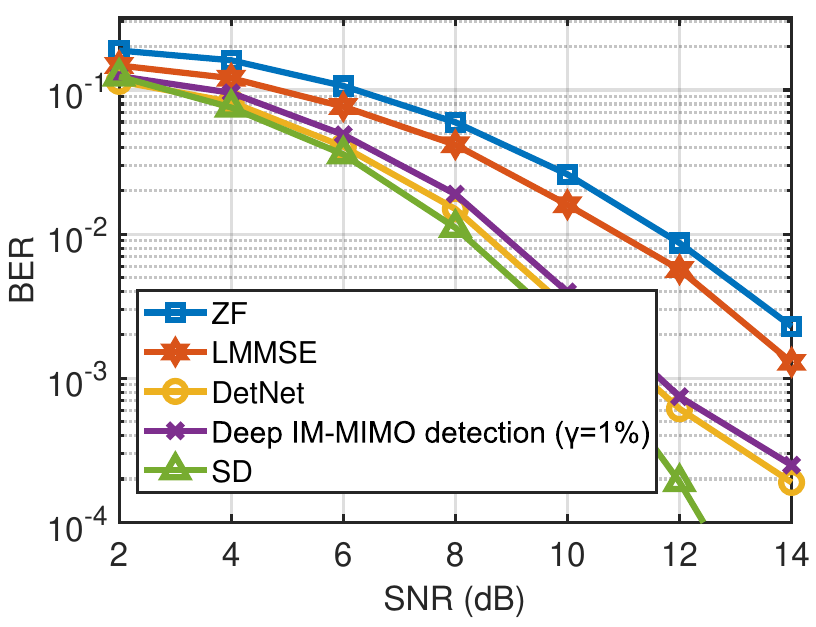}
\label{correlated}}
\caption{BER comparisons under different channels, $(N_t,N_r)=(20,30)$.}
\label{channel_type}
\end{figure*}

Fig. \ref{channel_type} compares the BER performance under different channel types under QPSK modulation. As illustrated in Fig. \ref{rayleigh}, all baseline methods (ZF, MMSE, SD, FullyCon~\cite{samuel2019learning}, and DetNet) are implemented using conventional digital computing platforms. The FullyCon detector is a purely data‑driven model. Linear detectors (ZF and MMSE) offer lower computational complexity, but they suffer from notable degradation in detection accuracy. 
In contrast, the proposed deep IM-MIMO detector not only achieves superior detection accuracy, comparable to that of the optimal SD, but also offers significantly lower processing latency due to its IMC-accelerated architecture. Moreover, the accuracy of the FullyCon detector degrades significantly under time-varying Rayleigh fading channels. In Fig. \ref{rician} and Fig. \ref{correlated}, we include both Rician fading to simulate Line-of-Sight (LoS) conditions (with a Rician K-factor of 3) and spatially correlated Rayleigh fading to simulate realistic Non-Line-of-Sight (NLoS) conditions (with a correlation coefficient of 0.5)~\cite{kermoal2002stochastic}. The BER performance of the proposed design degrades under these more realistic channel conditions but still outperforms conventional linear detectors.

\section{Conclusions}\label{conclusion}
This work presents an ultra-low-latency deep IM-MIMO detector that combines the strengths of deep unfolding and IMC. By transforming an iterative detection algorithm into a feedforward neural network through deep unfolding, the design enables efficient hardware implementation. To accelerate the computationally intensive MVMs, the proposed architecture integrates channel-dependent and neural network modules using memristor-based IMC. To mitigate the impact of noise originating from both the IMC hardware and the wireless channel, a noise-aware training scheme is introduced, enhancing the robustness of the detector. The performance of our design is quantified in terms of detection accuracy, processing latency, and hardware complexity, offering insights for practical system design.

By rethinking the interaction between deep unfolding and IMC, this work provides both theoretical insights and practical guidance for AI-native baseband processing in 6G networks, thereby laying a foundation for future research. One promising direction involves extending the integration of IMC and deep unfolding to address the channel estimation problem, where computational efficiency is often limited by intensive MVMs. Additionally, exploring the synergy between AI and IMC for other wireless communication challenges, such as hybrid beamforming in terahertz systems, physical layer security, and resource management for 6G multi-band networks, represents a rich area for follow-up investigation. As a preliminary study, this work is expected to stimulate more innovations in the development of IMC-based wireless modules for future ultra-low-latency wireless communications.

{\appendices
\section{Proof of Lemma 1}
Consider the crossbar arrays are fully reset. A train of programming pulses is applied to increase the conductance from $G_{\rm off}$ to the target conductance $G$. Consequently, the difference $\Delta G$ between $G$ and the resultant imprecise conductance state follows the distribution \cite{zeng2023realizing}:
    \begin{equation}
        \Delta G| G\sim \mathcal{N}\left( {0,\frac{{{\sigma _{{\Delta}g}^2}(G - {G_{\rm off}}){N_p}}}{{{G_{\rm on}} - {G_{\rm off}}}}} \right). \label{delta_G}
    \end{equation}

For $h_{ij}>0$, the error $\Delta h_{ij}=\frac{G_{ij}^++\Delta G_{ij}-G_{\mathrm{off}}}{\mu}-\frac{G_{ij}^+-G_{\mathrm{off}}}{\mu}=\frac{\Delta G_{ij}}{\mu}$; for $h_{ij}<0$, $\Delta h_{ij}=\frac{G_{\mathrm{off}}-(G_{ij}^-+\Delta G_{ij})}{\mu}-\frac{G_{\mathrm{off}}-G_{ij}^-}{\mu}=-\frac{\Delta G_{ij}}{\mu}$. Combining both cases, the conditional variance of $\Delta h_{ij}$ given $G_{ij}$ is derived as 
\begin{equation} \label{variance}
\begin{split}
    \mathrm{Var}[\Delta h_{ij}|G_{ij}] &= \frac{1}{\mu^2}\mathrm{Var}[\Delta G_{ij}| G_{ij}] \\
    &= \frac{\sigma_{\Delta g}^2 (G_{ij} - G_{\mathrm{off}}) N_p}{\mu^2 (G_{\mathrm{on}} - G_{\mathrm{off}})}.
\end{split}
\end{equation}

Substituting $G_{ij}-G_{\mathrm{off}}=\mu |h_{ij}|$ and $\sigma_{\Delta g}=\gamma(G_{\mathrm{on}}-G_{\mathrm{off}})$ into (\ref{variance}), the conditional variance of $\Delta h_{ij}$ conditioned on $h_{ij}$ yields 
\begin{equation}\label{delta_h}
    \sigma _{\Delta {h_{ij}}}^2 = 3{\gamma ^2}{N_p}|{h_{ij}}|.
\end{equation}

This completes the proof. $\hfill\blacksquare$

\section{Proof of Lemma 2}
Based on the following equivalent transformations
\begin{equation}
    \mathbf{u}_k=\mathbf{C}_1\mathbf{q}_{k-1}+\mathbf{c}_2,
\end{equation}
\begin{equation}
{ \Delta\mathbf{C}_1} = \left[ {\begin{array}{*{20}{c}}
{{\alpha _{2k}}\mathbf{({H^\textit{T}}} \Delta \mathbf{H} + \Delta \mathbf{{H^\textit{T}}H})}&\mathbf{0}\\
\mathbf{0}&\mathbf{0}
\end{array}} \right],\label{C1}
\end{equation}
\begin{equation}
    \Delta {\mathbf{c}_2} = \left[ {\begin{array}{*{20}{c}}
{ - {\alpha _{1k}}\mathbf{({H^\textit{T}}}\mathbf{n} + \Delta \mathbf{H^\textit{T}y_{\mathrm{0}})}}\\
\mathbf{0}
\end{array}} \right], \label{C2}
\end{equation}
we can derive the relationship between ${\mathbf{q}_k}$ and ${\mathbf{q}_{k+1}}$ as follows
\begin{equation}
\begin{aligned}
\mathbf{q}_{k} &= \mathbf{\hat{W}}_{2k}\mathbf{z}_k \\
&= \mathbf{\hat{W}}_{2k}\mathbf{D}_k\left( \mathbf{W}_{1k}\mathbf{u}_k \right)\\
&= \mathbf{\hat{W}}_{2k}\mathbf{D}_k\left( \mathbf{W}_{1k}\mathbf{C}_1\mathbf{q}_{k-1} + \mathbf{W}_{1k}\mathbf{c}_2\right)\\
&= \mathbf{\hat{W}}_{2k}\mathbf{D}_k\mathbf{W}_{1k}\mathbf{C}_1\mathbf{q}_{k-1} + \mathbf{\hat{W}}_{2k}\mathbf{D}_k\mathbf{W}_{1k}\mathbf{c}_2
\end{aligned}\label{hat_u_k}
\end{equation}

When additive noise is present, the input becomes $\mathbf{q}_{k-1}+\mathbf{e}_{k-1}$ in the $k$-th block. Consequently, the output error of the $k$-th block is $\mathbf{e}_{k}=\mathbf{\hat{W}}_{2k}\mathbf{D}_k\mathbf{W}_{1k}(\mathbf{C}_1+\Delta \mathbf{C}_1)\mathbf{e}_{k-1}+\mathbf{\hat{W}}_{2k}\mathbf{D}_k\mathbf{W}_{1k}\Delta \mathbf{C}_1\mathbf{q}_{k-1}+\mathbf{\hat{W}}_{2k}\mathbf{D}_k\mathbf{W}_{1k}\Delta \mathbf{c}_2$. Taking the spectral norm expectation on both sides, we have 
\begin{equation}\label{e_k}
\begin{aligned}
    \mathbb{E}{\left\| {{\mathbf{e}_{k}}} \right\|_2} \le {} &\mathbb{E}{\left\| {{\mathbf{\hat{W}}_{2k}}{\mathbf{D}_k}{\mathbf{W}_{1k}}}{(\mathbf{C}_1+\Delta\mathbf{C}_1)} \right\|_2} \cdot \mathbb{E}{\left\| {{\mathbf{e}_{k-1}}} \right\|_2} \\
    &+ \mathbb{E}{\left\| {{\mathbf{\hat{W}}_{2k}}{\mathbf{D}_k}{\mathbf{W}_{1k}}}{\Delta\mathbf{C}_1} \right\|_2} \cdot \mathbb{E}{\left\| {{\mathbf{q}_{k-1}}} \right\|_2} \\
    &+ \mathbb{E}{\left\| {{\mathbf{\hat{W}}_{2k}}{\mathbf{D}_k}{\mathbf{W}_{1k}}}\Delta {\mathbf{c}_2} \right\|_2}.
\end{aligned}
\end{equation}

This completes the proof. $\hfill\blacksquare$

\section{Proof of Lemma 3}
Based on (\ref{delta_h}), $\Delta {h_{ij}}$ follows the conditional distribution $\Delta {h_{ij}}|{h_{ij}}\sim{\cal N}\left( {0,\sigma _{\Delta {h_{ij}}}^2} \right)$. By the law of total variance and the law of total expectation, we have
\begin{equation}
\text{Var} \left[ \Delta h_{ij} \right] = \mathbb{E}_h \left[ \text{Var} \left( \Delta h_{ij} \mid h_{ij} \right) \right] = 3{\gamma ^2}{N_p} \mathbb{E} \left[ |h_{ij}| \right],
\end{equation}
\begin{equation}
    \mathbb{E}[\Delta h_{ij}]=\mathbb{E}_h[\mathbb{E}[\Delta h_{ij}|h_{ij}]]=0.
\end{equation}
Since $h_{ij}\sim \mathcal{N} (0,1)$, $\mathbb{E}[|h_{ij}|]=\sqrt {\frac{2}{\pi}}$. As a result, the unconditional variance of $\Delta h_{ij}$ is 
\begin{equation}
    \sigma_{\Delta h}^2=3\sqrt {\frac{2}{\pi }} {\gamma ^2}{N_p}.
\end{equation}
According to a fundamental inequality $\|\Delta\mathbf{H}\|_2 \le \|\Delta\mathbf{H}\|_F$, we have $\mathbb{E}\|\Delta\mathbf{H}\|_2 \le \mathbb{E}\|\Delta\mathbf{H}\|_F$, where $\|\Delta\mathbf{H}\|_F = \sqrt{{\textstyle \sum_{i=1}^{2N_t}} {\textstyle \sum_{j=1}^{2N_r}}(\Delta h_{ij})^2 }$ denotes the Frobenius norm of $\Delta\mathbf{H}$. Using the linearity of expectation, $\mathbb{E}[\|\Delta \mathbf{H}\|_F^2]$ is derived as
\begin{equation}
\begin{split}
    \mathbb{E}[\|\Delta \mathbf{H}\|_F^2] &=\sum_{i=1}^{2N_t}\sum_{j=1}^{2N_r}\mathbb{E}[(\Delta h_{ij})^2] \\
    &= 12\sqrt {\frac{2}{\pi }} {\gamma ^2}{N_p}N_tN_r.
\end{split}
\end{equation}
By Jensen’s inequality, $\mathbb{E}[\|\Delta \mathbf{H}\|_F] \le \sqrt{\mathbb{E}[\|\Delta \mathbf{H}\|_F^2]}$. Thus, 
\begin{equation}
    \mathbb{E}[\|\Delta \mathbf{H}\|_2] \le \mathbb{E}[\|\Delta \mathbf{H}\|_F] \le 2\gamma\sqrt{3\sqrt{\frac{2}{\pi}}N_pN_tN_r} .
\end{equation}

Given that $\mathbf{H}\in\mathbb{R} ^{2N_r \times 2N_t}$ is an i.i.d. Gaussian matrix with entries following $\mathcal{N}(0, 1)$, the upper bound of $\mathbb{E}{\left\| \mathbf{H} \right\|_2}$ is given by \cite{vershynin2010introduction}
\begin{equation}
    \mathbb{E}{\left\| \mathbf{H} \right\|_2} \le \sqrt {{2N_t}}  + \sqrt {{2N_r}}.
\end{equation}

During inference, ${\alpha _{ik}}$ is a constant. Define $\varpi_i = \max \alpha _{ik}$ for $i \in \{1, 2\}$ and $k\in \{1,\cdots ,L\}$. We thus derive the following upper bound
\begin{equation}
\begin{aligned}
    &\mathbb{E}\left\| \mathbf{C}_1+\Delta \mathbf{C}_1 \right\|_2 \\
    &\leq \varpi_2 \mathbb{E}\left\| \mathbf{H^\textit{T} H + H^\textit{T} }\Delta \mathbf{H} + \Delta \mathbf{H}^\textit{T} \mathbf{H} \right\|_2 \\
    &\leq \varpi_2 (2 \mathbb{E}\left\| \mathbf{H}^\textit{T} \Delta \mathbf{H} \right\|_2 + \mathbb{E}\left\| \mathbf{H}^\textit{T} \mathbf{H} \right\|_2) \\
    &\leq \varpi_2 (2 \mathbb{E}\left\| \mathbf{H} \right\|_2 \mathbb{E}\left\| \Delta \mathbf{H} \right\|_2 + \mathbb{E}\left\| \mathbf{H} \right\|_2 \mathbb{E}\left\| \mathbf{H} \right\|_2) \\
    &\leq \varpi_2  \left (\gamma\sqrt{6\sqrt{\frac{2}{\pi}}N_p}(\sqrt{N_t}+\sqrt{N_r})^3+2(\sqrt{N_t}+\sqrt{N_r})^2\right ).
\end{aligned}
\end{equation}

This completes the proof. $\hfill\blacksquare$

\section{Proof of Lemma 4} 
The expectation of the spectral norm of ${\alpha _{1k}}({\mathbf{H}^T}\mathbf{n} + \Delta {\mathbf{H}^T}\mathbf{y}_0)$ in (\ref{C2}) is derived as follows:
\begin{equation}
\begin{aligned}
&\mathbb{E}\left\| \mathbf{H}^T \mathbf{n} + \Delta \mathbf{H}^T \mathbf{y}_0 \right\|_2 \\
&\le \mathbb{E}\left\| \mathbf{H}^T \mathbf{n} \right\|_2 + \mathbb{E}\left\| \Delta \mathbf{H}^T \mathbf{H} \mathbf{x} \right\|_2 \\
&\le \mathbb{E}\left\| \mathbf{H} \right\|_2 \mathbb{E}\left\| \mathbf{n} \right\|_2 + \mathbb{E}\left\| \Delta \mathbf{H} \right\|_2 \mathbb{E}\left\| \mathbf{H} \right\|_2 \mathbb{E}\left\| \mathbf{x} \right\|_2 \\
&\le \mathbb{E}\left\| \mathbf{H} \right\|_2 \mathbb{E}\left\| \mathbf{n} \right\|_2 + \mathbb{E}\left\| \Delta \mathbf{H} \right\|_2 \mathbb{E}\left\| \mathbf{H} \right\|_2 
\end{aligned}
\end{equation}
Given that ${\left\| \mathbf{n} \right\|_2^2}$ follows a chi-square distribution $\left\| \mathbf{n} \right\|_2^2\sim\sigma _n^2{\chi ^2}({2N_r})$, we have 
\begin{equation}
    \mathbb{E}\left\| \mathbf{n} \right\|_2^2 = 2{N_r}\sigma _n^2.
\end{equation}
Considering the square root function is a concave function and applying Jensen's inequality, the expectation of $\left\| \mathbf{n} \right\|_2$ satisfies 
\begin{equation}
    \mathbb{E}\left\| \mathbf{n} \right\|_2^{} \le \sqrt {\mathbb{E}\left\| \mathbf{n} \right\|_2^2}  = \sigma _n^{}\sqrt {2{N_r}}.
\end{equation}
As a result, we obtain the following bound for (\ref{C2})
\begin{equation}
\begin{aligned}
    &\mathbb{E}{\left\| {{\Delta \mathbf{c}_2}} \right\|_2}\\
     &\le \varpi_1 \left (\mathbb{E}\left\| \mathbf{H} \right\|_2 \mathbb{E}\left\| \mathbf{n} \right\|_2 + \mathbb{E}\left\| \Delta \mathbf{H} \right\|_2 \mathbb{E}\left\| \mathbf{H} \right\|_2 \right ) \\
    &\le \varpi_1(\sqrt{2N_t}+\sqrt{2N_r})\left (\sigma _n\sqrt{2N_r}+2\gamma\sqrt{3\sqrt{\frac{2}{\pi}}N_pN_tN_r}\right )\\
    & \le 2\varpi_1\left ( \sigma_n\sqrt{N_r(N_t + N_r)} + \gamma\sqrt{3\sqrt{\frac{2}{\pi}}N_p} \cdot (N_t + N_r)^{\frac{3}{2}} \right )
\end{aligned}
\end{equation}

This completes the proof. $\hfill\blacksquare$

\section{Proof of Theorem 1}
For noise-free input $\mathbf{q}_k$, we have 
\begin{equation}
\begin{split}
    \mathbb{E}\left\| \mathbf{q}_{k} \right\|_2 &\le \mathbb{E}\left\| {{\mathbf{\hat W}_{2k}}{\mathbf{D}_k}{\mathbf{W}_{1k}}\mathbf{C}_1} \right\|_2 \mathbb{E}\left\| \mathbf{q}_{k-1} \right\|_2 \\
    &+ \mathbb{E}\left\| {{\mathbf{\hat W}_{2k}}{\mathbf{D}_k}{\mathbf{W}_{1k}}\mathbf{c}_2} \right\|_2.
   \label{noise_free_q}
\end{split}
\end{equation}
Define $\mathbb{E}\left\| {{\mathbf{\hat W}_{2k}}{\mathbf{D}_k}{\mathbf{W}_{1k}}\mathbf{C}_1} \right\|_2 \le {\bar a}$ and $\mathbb{E}{\left\| {{\mathbf{W}_{2k}}{\mathbf{D}_k}{\mathbf{W}_{1k}}{(\mathbf{C}_1+\Delta\mathbf{C}_1)}} \right\|_2}\le {\bar b} $. Since (\ref{noise_free_q}) is a linear non-homogeneous recursion with $\mathbf{q}_0=\mathbf{0}$, the general solution under this condition is:
\begin{equation}
\mathbb{E}{\left\| \mathbf{q}_{k} \right\|_2 \le \frac{{\bar a}^{k}-1}{{\bar a}-1}\mathbb{E}\left\| {{\mathbf{\hat W}_{2k}}{\mathbf{D}_k}{\mathbf{W}_{1k}}\mathbf{c}_2} \right\|_2}.
\label{qk}
\end{equation}
From the recursive relation in (\ref{e_k}), we have 
\begin{equation}
\begin{split}
    {\mathbb E}{{\left\| {{{\bf{e}}_L}} \right\|}_2} &\le \mathbb{E}\left\| {{\mathbf{\hat W}_{2k}}{\mathbf{D}_k}{\mathbf{W}_{1k}}\Delta \mathbf{c}_2} \right\|_2 \cdot \sum\limits_{i = 0}^{L - 1} {{{\bar b}^i}} \\
    &+ \mathbb{E}\left\| {{\mathbf{\hat W}_{2k}}{\mathbf{D}_k}{\mathbf{W}_{1k}}\Delta \mathbf{C}_1} \right\|_2 \cdot \sum\limits_{k = 1}^L {{{\bar b}^{L - k}}{\mathbb E}{{\left\| {{{\bf{q}}_k}} \right\|}_2}},
    \label{e_L2}
\end{split}
\end{equation}
where ${\mathbb E}{{\left\| {{{\bf{e}}_0}} \right\|}_2}=0$. Substituting (\ref{qk}) into (\ref{e_L2}) and rearranging terms, we obtain
\begin{equation}\label{bound_eL}
\begin{split}
{\mathbb E}{{\left\| {{{\bf{e}}_L}} \right\|}_2} &\le \frac{{1 - {{\bar b}^{L}}}}{{1 - {\bar b}}}{\mathbb E}{\left\| {{{{\bf{\hat W}}}_{2k}}{{\bf{D}}_k}{{\bf{W}}_{1k}}\Delta {{\bf{c}}_2}} \right\|_2} \\
& + \frac{{{{\bar a}^L}-{\bar b}^L}}{{({\bar a} - {\bar b})({\bar a} - 1)}} {{\mathbb E}{\left\| {{{{\bf{\hat W}}}_{2k}}{{\bf{D}}_k}{{\bf{W}}_{1k}}\Delta {{\bf{C}}_1}} \right\|}_2}\\
&\quad\times {\mathbb E}{{\left\| {{{{\bf{\hat W}}}_{2k}}{{\bf{D}}_k}{{\bf{W}}_{1k}}{{\bf{c}}_2}} \right\|}_2}.
\end{split}
\end{equation}

For ${{\mathbf{\hat W}_{2k}}{\mathbf{D}_k}{\mathbf{W}_{1k}}}$ in (\ref{bound_eL}), it quantifies the input transformation properties from a neural network perspective. Leveraging the network initialization strategy in \cite{allen2019convergence}, each computational block in our system can be modeled as a two-layer fully-connected neural network. By applying the result in \cite{zhu2022robustness}, the following spectral norm bound holds
\begin{equation}
\mathbb{E}\left\|\hat{\mathbf{W}}_{2k}\mathbf{D}_k\mathbf{W}_{1k}\right\|_2 \leq \underbrace{\sqrt{\frac{4S}{3N_r}}e^{-S/8}}_{\Omega}+1.
\end{equation}

Based on Lemmas 2, 3, and 4, the upper bound of ${\mathbb E}{{\left\| {{{\bf{e}}_L}} \right\|}_2}$ is derived as follows
\begin{equation}
\begin{split}
    \mathbb{E}\|\mathbf{e}_L\|_2 &\leq \frac{1 - \tau{(\Omega+1)}^L}{1 - \tau{(\Omega+1)}} \xi  {(\Omega+1)} \\
    &+ \frac{C\Phi^5(\varphi^L-\tau^L)}{(\varphi{(\Omega+1)} - 1)(\varphi-\tau)} \cdot{(\Omega+1)}^{L+1},\label{appendix_bound}
\end{split}
\end{equation}
where 
\begin{equation}
    C = 2\varpi_1 \varpi_2  \gamma \sqrt{6\sqrt{\frac{2}{\pi}}N_p},
\end{equation}
\begin{equation}
    \tau = \varpi_2  \left( \gamma \sqrt{6 \sqrt{ \frac{2}{\pi}} N_p} \cdot \Phi^3 + 2 \Phi^2 \right),
\end{equation}
\begin{equation}
    \varphi = 2\varpi_2  \Phi^2.
\end{equation}
After performing a Taylor expansion on the right-hand side of (\ref{appendix_bound}), we denote 
\begin{equation}\label{secondTerm}
\begin{split}
    &\frac{C\Phi^5(\varphi^L-\tau^L)}{(\varphi{(\Omega+1)} - 1)(\varphi-\tau)} \cdot{(\Omega+1)}^{L+1}\\
    &= \frac{C\Phi^5(\varphi^L-\tau^L)}{(\varphi - 1)(\varphi-\tau)}+\frac{C\Phi^5(\varphi^L-\tau^L)}{(\varphi - 1)(\varphi-\tau)}\cdot \left ( L-\frac{1}{\varphi-1} \right ){\Omega}\\
    &\quad+O(\Omega^2),
\end{split}
\end{equation}
and
\begin{equation}\label{firstTerm}
    \frac{1 - \tau{(\Omega+1)}^L}{1 - \tau{(\Omega+1)}} \xi  {(\Omega+1)}= \xi+\xi\cdot \frac{1-\tau L}{1-\tau}{\Omega}+O(\Omega^2).
\end{equation}
Here, $\Gamma=\frac{C\Phi^5(\varphi^L-\tau^L)}{(\varphi - 1)(\varphi-\tau)}$ characterizes the cumulative coefficient of output error over $L$ blocks. 

Substituting (\ref{secondTerm}) and (\ref{firstTerm}) into (\ref{appendix_bound}), the bound can be rewritten as
\begin{equation}
\begin{split}
    \mathbb{E}\|\mathbf{e}_L\|_2 &\leq \xi + \Gamma+\left [ \xi\cdot \frac{1-\tau L}{1-\tau}+ \Gamma\cdot \left( L-\frac{1}{\varphi-1} \right)\right ]{\Omega}\\
    &+O(\Omega^2).
\end{split}
\end{equation}

This completes the proof. $\hfill\blacksquare$}

\bibliography{Reference}
\bibliographystyle{IEEEtran}

\end{document}